\documentclass{patmorin}
\usepackage{pat}
\usepackage[T1]{fontenc}
\usepackage[utf8]{inputenc}
\usepackage{todonotes,float,subcaption}
\usepackage{algorithm,algpseudocode}

\usepackage[mathlines]{lineno}
\newcommand*\patchAmsMathEnvironmentForLineno[1]{%
 \expandafter\let\csname old#1\expandafter\endcsname\csname #1\endcsname
 \expandafter\let\csname oldend#1\expandafter\endcsname\csname end#1\endcsname
 \renewenvironment{#1}%
    {\linenomath\csname old#1\endcsname}%
    {\csname oldend#1\endcsname\endlinenomath}}%
\newcommand*\patchBothAmsMathEnvironmentsForLineno[1]{%
 \patchAmsMathEnvironmentForLineno{#1}%
 \patchAmsMathEnvironmentForLineno{#1*}}%
\AtBeginDocument{%
\patchBothAmsMathEnvironmentsForLineno{equation}%
\patchBothAmsMathEnvironmentsForLineno{align}%
\patchBothAmsMathEnvironmentsForLineno{flalign}%
\patchBothAmsMathEnvironmentsForLineno{alignat}%
\patchBothAmsMathEnvironmentsForLineno{gather}%
\patchBothAmsMathEnvironmentsForLineno{multline}%
}

\usepackage[longnamesfirst,numbers,sort&compress]{natbib}

\definecolor{brightmaroon}{rgb}{0.76, 0.13, 0.28}
\definecolor{linkblue}{rgb}{0, 0.337, 0.227}
\newcommand{\defin}[1]{\emph{\textcolor{brightmaroon}{#1}}}
\makeatletter
\def\mathcolor#1#{\@mathcolor{#1}}
\def\@mathcolor#1#2#3{%
  \protect\leavevmode
  \begingroup
    \color#1{#2}#3%
  \endgroup
}
\makeatother

\usepackage[inline]{enumitem}
\newcommand{\sm}{\smallsetminus}
\newenvironment{clmproof}{\begin{proof}[Proof of Claim:]}{\end{proof}}

\title{\MakeUppercase{$3$-packings in Triangulations: Algorithms, bounds, and Complexity}}

\author{
Prosenjit Bose\thanks{School of Computer Science, Carleton University, Ottawa, Canada. Emails: {\tt jit@scs.carleton.ca},  {\tt anil@scs.carleton.ca}, and {\tt bobby.miraftab@gmail.com}}
\;\;\;\;\;\;Anil Maheshwari\footnotemark[1]\;\;\;\;\;\;Bobby Miraftab\footnotemark[1]\;\;\;\;\;\;
Yota Otachi\thanks{Nagoya University, Nagoya, Japan. Email: {\tt otachi@nagoya-u.jp}.}
}

\begin{document}
\maketitle

\begin{abstract}
We study $H$-packings in plane triangulations for the three-vertex graphs
$H\in\{P_3,K_3,P_2\cup P_1\}$. For a graph $H$, let $\lambda_H(G)$ denote
the maximum size of an $H$-packing in $G$, with the convention that for
$H=P_2\cup P_1$ the copies are required to be induced. For $P_3$-packings,
we prove that every triangulation $G$ on $n$ vertices satisfies
$\lambda_{P_3}(G)\ge \left\lfloor \frac n5\right\rfloor$,
and show that this lower bound is asymptotically tight. 
We also study
triangle packings in triangulations and provide lower bounds for
$\lambda_{K_3}(G)$ in terms of the maximum degree and the degree sequence.
We give a face-path characterization of triangle factors in $4$-connected
plane triangulations using a hamiltonian cycle and the weak duals of the two
associated maximal outerplanar graphs. Finally, for induced packings by
$P_2\cup P_1$, we prove that every plane triangulation $T$ on $n$ vertices
satisfies $\lambda_{P_2\cup P_1}(T)\ge
\left\lfloor \frac n3\right\rfloor-2$,
and show that such a packing can be found in polynomial time.
\end{abstract}

Packing or tiling problems ask for large collections of vertex-disjoint copies of a fixed pattern inside a host graph. 
More precisely for graphs $H$ and $G$, an \defin{$H$-packing} ( also called an \defin{$H$-tiling}), in $G$ is a family of pairwise vertex-disjoint subgraphs of $G$, each isomorphic to $H$. 
The objective in the optimization version is to maximize the size of such a family. The spanning analogue is a \defin{perfect $H$-packing}, also known as an \defin{$H$-factor}.
The case $H=K_2$ recovers ordinary matching, placing $H$-packing at the intersection of matching theory, extremal combinatorics, and algorithmic graph theory. From the computational perspective, this generalization becomes difficult as soon as one moves beyond $K_2$. In the language of generalized matchings and $G$-partitions, Hell and Kirkpatrick \cite{HellKirkpatrick1981,KirkpatrickHell1983} showed that if the pattern has a component on at least three vertices, then deciding whether the vertex set of a graph can be partitioned into copies of the pattern is NP-complete.
When all components have order at most two, the problem is solvable by matching-type methods. In contrast, dense-graph theory provides broad structural conditions guaranteeing perfect packings.
In fact the seminal work on $H$-factors in dense graphs by \citet{alon1996h} and subsequent sharp threshold results by \citet{KuhnOsthus2009} show that sufficiently large minimum degree forces a perfect $H$-tiling, with the asymptotically correct threshold determined by chromatic-type parameters of $H$.
Among the smallest non-trivial patterns are the three-vertex graphs $P_3$, $K_3$, and $P_2\cup P_1$. Despite their simplicity, the corresponding packing and factor problems already exhibit diverse behaviour, ranging from hardness phenomena in sparse graph classes to structured regimes where planar constraints can be exploited.

The $P_3$ case has inspired a substantial line of work on \defin{3-vertex path packings}, often called $\Lambda$-packings, and spanning $\Lambda$-factors. \citet{p3} developed structural and extremal results for packing $3$-vertex paths, and Kelmans \cite{Kelmans2011} further advanced the theory in claw-free graphs, where additional local structure enables stronger packing and factor statements. These results show how restrictions on the host graph can substantially change the behaviour of small-pattern packings.

For triangles, this line of research traces back to the theorem of Corradi and Hajnal \cite{CorradiHajnal1963}, which initiated the study of minimum-degree conditions forcing large collections of disjoint cycles and triangles. \citet{CapraraRizzi2002} refined the computational picture by showing that maximum node-disjoint triangle packing is polynomial-time solvable when $\Delta(G)\le 3$, while NP-hardness already holds for planar graphs of maximum degree $4$. More recently, in parameterized complexity, \citet{ShengXiao2022} obtained an improved linear kernel for planar vertex-disjoint triangle packing, reducing the kernel size from a previously known $732k$ bound to $141k$ vertices.

In this paper, we study these three-vertex packing problems in the particularly rigid planar class of triangulations. Triangulations are sparse in the planar sense, but locally dense, as every face is a triangle, and their facial structure gives a natural source of candidate copies of $K_3$. 
They also have a strong global structure. 
For example, $4$-connected triangulations are hamiltonian by Tutte's theorem, and stronger results show that in a $4$-connected plane graph one can often force hamiltonian cycles through prescribed edges; see \cite{Whitney1932,Tutte1956,Sanders1996}. 
These features make triangulations a natural setting in which to seek sharp packing theorems and useful algorithmic descriptions.

Our first main result concerns $P_3$-packings. 
We prove that every triangulation $G$ on $n$ vertices contains a $P_3$-packing of size at least $\left\lfloor n/5\right\rfloor$.
The proof uses Barnette's theorem, which guarantees a spanning tree of maximum degree at most $3$ in every $3$-connected planar graph, together with a simple packing lemma for subcubic trees. 
We also construct an infinite family of triangulations showing that the constant $1/5$ is asymptotically best possible.

We then study triangle packings. For a plane triangulation $G$, we introduce an auxiliary graph whose vertices are the facial triangles of $G$, with two vertices adjacent whenever the corresponding facial triangles share a vertex in $G$. 
Independent sets in this auxiliary graph correspond exactly to families of pairwise vertex-disjoint facial triangles. 
Applying the Caro--Wei bound to this auxiliary graph gives degree-sensitive lower bounds on the maximum size of a triangle packing. 
In particular, if $\Delta=\Delta(G)$, then we get 
\[
\nu_3(G)\ge
\left\lceil
\frac{2n-4}{3\Delta-5}
\right\rceil ,
\]
and we also derive refined versions depending on the degree sequence of $G$. These bounds are constructive and lead directly to polynomial-time algorithms for finding such packings.

Next we consider triangle factors. 
In a $4$-connected plane triangulation $T$, every triangle is facial. Using the existence of a hamiltonian cycle of $T$, we decompose the triangulation into two maximal outerplanar graphs. The weak duals of these two graphs are trees, and the facial triangles incident with a fixed vertex form a path in the appropriate weak dual. This gives a characterization of triangle factors in terms of selecting vertices from two families of face-paths so that every original vertex is covered exactly once.

Finally, we study induced packings by $P_2\cup P_1$. 
Here we benefit from the structure of the complement graph.
In fact a triple induces $P_2\cup P_1$ in a triangulation $T$ precisely when it induces $P_3$ in $\overline T$. Using a matching theorem for planar graphs and a planar bipartite obstruction argument, we prove that every plane triangulation $T$ on $n$ vertices satisfies
\[
\lambda_{P_2\cup P_1}(T)
\ge
\left\lfloor \frac n3\right\rfloor -2 .
\]
We also show that such a packing can be found in polynomial time.

\subsection{Overview}

The paper is organized as follows. After the preliminaries, we prove the lower bound for $P_3$-packings in triangulations and discuss $P_3$-factors. We then study $K_3$-packings through the auxiliary graph of facial triangles, deriving degree-based lower bounds and algorithmic consequences. Next we turn to triangle factors in $4$-connected triangulations. We conclude with induced $(P_2\cup P_1)$-packings.


\section{Preliminaries}

All graphs in this paper are finite, connected. For a graph $G$, we write
$V(G)$ and $E(G)$ for its vertex set and edge set, and we call $|V(G)|$ the
\defin{order} of $G$. For a vertex $v\in V(G)$, let $d_G(v)$ denote its
degree and $N_G(v)$ its neighbourhood. We write $\Delta(G)$ and $\delta(G)$
for the maximum and minimum degree of $G$, respectively. When the graph is
clear from context, we write $d(v)$, $N(v)$, $\Delta$, and $\delta$.
For $S\subseteq V(G)$, let $G[S]$ denote the subgraph of $G$ induced by $S$.
The complement of $G$ is denoted by $\overline G$. A set $S\subseteq V(G)$ is
\defin{stable}, or \defin{independent}, if no two vertices of $S$ are adjacent.
We write $\alpha(G)$ for the maximum size of a independent set  in $G$.
For graphs $H$ and $G$, an \defin{$H$-packing} in $G$ is a family of pairwise
vertex-disjoint subgraphs of $G$, each isomorphic to $H$. An $H$-packing is
\defin{perfect}, or an \defin{$H$-factor}, if it covers every vertex of $G$.
A copy of a graph $F$ in $G$ is \defin{induced} if it is an induced subgraph of
$G$ isomorphic to $F$.
A \defin{plane graph} is a planar graph together with a fixed embedding. The
set of faces of a plane graph $G$ is denoted by $F(G)$. A \defin{facial
triangle} is a facial cycle of length three.
A \defin{plane triangulation} is a plane graph in which every face, including
the outer face, is bounded by a triangle. 
In a plane triangulation, every non-facial triangle is separating; in
particular, every triangle of a $4$-connected plane triangulation is facial.
A cycle $C$ in a plane graph is \defin{separating} if both the interior and
the exterior of $C$ contain vertices of the graph. A \defin{separating
triangle} is a separating cycle of length three.
A graph is \defin{$k$-connected} if it has more than $k$ vertices and remains
connected after deleting any set of fewer than $k$ vertices. 
A \defin{hamiltonian cycle} of a graph $G$ is a cycle containing every vertex of $G$.
A plane graph is \defin{outerplanar} if it has an embedding in which all
vertices lie on the outer face. It is \defin{maximal outerplanar} if no edge
can be added while preserving outerplanarity; equivalently, every bounded
face is a triangle.
For a plane graph $G$, the \defin{dual graph} $G^*$ has one vertex for each face of $G$, with two vertices adjacent whenever the corresponding faces of $G$ share an edge. If $G$ has a distinguished outer face, the \defin{weak dual} of $G$ is the subgraph of $G^*$ induced by the vertices corresponding to the bounded faces of $G$.

\section{\texorpdfstring{$P_3$-packings}{P3-packings}}

We begin with packings by $3$-vertex paths. For a graph $G$, a
\defin{$P_3$-packing} in $G$ is a collection of pairwise vertex-disjoint
subgraphs of $G$, each isomorphic to $P_3$. We write $\lambda(G)$ for the
maximum size of a $P_3$-packing in $G$.

We start with the Barnette's theorem.
The role of Barnette's theorem in our argument is to reduce the problem from triangulations to trees of bounded degree. 
Indeed, if a graph $G$ contains a spanning subgraph $T$, then every $P_3$-packing in $T$ is also a $P_3$-packing in $G$. 
Thus, for a triangulation $G$, it is enough to find a
large $P_3$-packing in a suitable spanning tree of $G$. Barnette's theorem guarantees such a spanning tree with maximum degree at most $3$. We are therefore led to the following elementary packing lemma for subcubic trees.

\begin{lem}{\rm\cite[]{bar}}
Every $3$-connected planar graph has a spanning tree of maximum degree at most $3$.
\end{lem}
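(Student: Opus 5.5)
Since the statement is Barnette's theorem and the paper only cites it, I would prove it through the stronger \emph{$2$-walk} statement of Gao and Richter rather than follow Barnette's original argument. A \emph{$2$-walk} in a graph $G$ is a walk that visits every vertex of $G$ at least once and at most twice. The proof has two independent parts: a reduction from a $2$-walk to a spanning tree of maximum degree at most $3$, and an inductive proof that $3$-connected planar graphs have $2$-walks.

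\textbf{Reduction.} Suppose $W=v_0v_1\cdots v_m$ is a $2$-walk in a connected graph $G$. Let $T$ be the set of edges $v_{i-1}v_i$ for which $v_i$ does not occur among $v_0,\dots,v_{i-1}$.
\begin{enumerate*}[label=(\roman*)]
\item $T$ is a spanning tree, since every vertex other than $v_0$ receives exactly one \emph{entry} edge, and that edge joins it to an earlier vertex.
\item Every edge of $T$ at a vertex $v$ is either $v$'s entry edge or an edge $v_jv_{j+1}$ leaving one of the visits $v_j=v$.
\item $v$ has at most two visits, and therefore at most two such leaving edges.
\item Consequently $d_T(v)\le 3$, and $d_T(v_0)\le 2$ because $v_0$ has no entry edge.
\end{enumerate*}
This step is routine.

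\textbf{Induction setting.} It therefore suffices to show that every $3$-connected planar graph has a $2$-walk. I would prove this for the broader class of \emph{circuit graphs}, because that class is closed under the decompositions the induction needs. A circuit graph is a pair $(G,C)$ where $G$ is a $2$-connected plane graph with outer cycle $C$, and adding a vertex adjacent to all of $C$ yields a $3$-connected graph. A $3$-connected plane graph together with any facial cycle is a circuit graph. The inductive statement will be: for any two vertices $x,y\in C$, and optionally a prescribed edge of $C$, there is a $2$-walk from $x$ to $y$ in which the vertices of $C$, or suitable specified ones among them, are visited only once. This is Gao--Richter's statement, and the extra boundary control is exactly what lets pieces be glued.

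\textbf{Inductive step.} I would follow Thomassen's technique for Hamilton paths in $4$-connected planar graphs. Take the outer cycle $C$ and consider the $2$-cuts of $G$ that separate along $C$ (``chords'' in the block sense). These cuts split $G$ into smaller circuit graphs $G_1,\dots,G_k$ arranged in a chain from $x$ to $y$, linked through cut vertices obtained after deleting a suitable path of $C$. For each piece I would choose appropriate endpoints, apply induction to get a $2$-walk with the boundary constraints, and concatenate the walks. The hard part is the concatenation. A cut vertex or a $2$-cut vertex may be visited once from each neighbouring piece, which gives up to two visits, and a third visit must be prevented by the boundary conditions. This bookkeeping is the main obstacle. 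In particular, one must check that the strengthened hypothesis (prescribed endpoints on $C$, and once-visited outer vertices where needed) is strong enough to prevent a third visit and still weak enough to be reproduced in each piece. I would settle this first by a careful case analysis of whether $x$ and $y$ lie in the same block after deleting the chosen outer path. The base cases, cycles and small graphs, are trivial.
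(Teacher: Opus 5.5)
The paper does not prove this lemma; it imports it from Barnette as a black box, so your argument has to stand on its own.

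Your first half does. Take the first-entry tree of a walk that visits every vertex at least once and at most twice. Each vertex has at most one entry edge, plus at most one edge to a new vertex per visit. So $T$ is spanning and $d_T(v)\le 3$. If you simply cite Gao and Richter's theorem that every $3$-connected planar graph has a closed $2$-walk, you get a correct alternative derivation. It trades Barnette's citation for a stronger one and makes the constant $3$ transparent: one entry edge plus at most two departures.

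As a self-contained proof, however, your second half has a genuine gap. The existence of $2$-walks is the whole difficulty, and you neither fix the induction hypothesis nor carry out the gluing.

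Read literally, the hypothesis asks for a walk between arbitrary $x,y\in V(C)$ visiting every vertex of $C$ exactly once. This is false already in your base case. A cycle $C$ is a circuit graph, since adding a vertex adjacent to all of $C$ gives a wheel. For non-adjacent $x,y$ on $C$, such a walk would repeat no vertex, so it would be a Hamiltonian $x$--$y$ path of the cycle. That forces $xy\in E(C)$.

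The fallback ``or suitable specified ones among them'' is exactly the missing content. You must specify:
\begin{itemize}
\item which boundary vertices are visited only once;
\item which boundary edge must be used;
\item where twice-visited vertices are allowed.
\end{itemize}
You then have to check every piece of your decomposition, including single-edge blocks (which are not circuit graphs) and the deleted boundary vertices with their edges into the pieces. For each, endpoints must be chosen so that no cut vertex is visited a third time and the hypothesis reproduces. Deferring this to ``a careful case analysis'' leaves the theorem unproved. Either cite Gao and Richter outright, or state their precise circuit-graph lemma and prove it.
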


\begin{lem}\label{lem:packing}
If $T$ is a tree on $m$ vertices with $\Delta(T)\le 3$, then $T$ contains a $P_3$-packing of size at least
$\left\lfloor m/5\right\rfloor$.
\end{lem}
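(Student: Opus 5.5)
We argue by strong induction on $m$. The claim is trivial for $m\le 4$, since then $\lfloor m/5\rfloor=0$. The goal of each step is to find a set $S$ of at most five vertices with three properties: $T[S]$ contains a $P_3$, $T-S$ is still a tree (connected), and $T-S$ has maximum degree at most $3$. The degree condition is automatic for a subgraph. Induction then gives a packing of size at least $\lfloor (m-|S|)/5\rfloor\ge \lfloor m/5\rfloor-1$ in $T-S$, and adding the $P_3$ inside $S$ gives the bound. If $T-S$ is disconnected, we apply induction to each component. Since $\sum_i\lfloor m_i/5\rfloor$ can lose up to $4$ per component, we want to arrange that the components cut off have order divisible by $5$, or that they are absorbed into $S$.

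Concretely, root $T$ at a leaf and choose a vertex $v$ of maximum depth such that the subtree $T_v$ has at least $3$ vertices. Every child $c$ of $v$ has $|T_c|\le 2$, and because $\Delta\le 3$ and $v$ is not the root, $v$ has at most two children. So $|T_v|\le 1+2+2=5$ and $|T_v|\ge 3$. Moreover $T_v$ contains a $P_3$: either a child with its own child together with $v$, or two children together with $v$. Removing $T_v$ leaves a connected tree on $m-|T_v|\ge m-5$ vertices. By induction that tree has a packing of size at least $\lfloor (m-5)/5\rfloor=\lfloor m/5\rfloor-1$. Adding the $P_3$ from $T_v$ finishes the proof.

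The only delicate point is the base or degenerate situation, which is the step to check carefully. If the chosen $v$ is the root's child or the root itself, the counts above change slightly. Rooting at a leaf with $m\ge 5$ ensures $v$ is not the root, since the root has only one child, so a deepest vertex with a large subtree is strictly below it. It also ensures that $v$ has at most two children. One must also check the case where no such $v$ exists below the root, but for $m\ge 3$ the root's child has subtree size $m-1\ge 3$, so $v$ exists.

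No step is a real obstacle: the bounded degree at $v$ and the fact that every component hanging below $v$ has size $\le 2$ keep $|T_v|\le 5$. The only care needed is the leaf-rooting trick, which frees a degree slot at $v$ for its parent edge.
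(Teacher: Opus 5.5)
Your proof is correct and is essentially the paper's argument: root $T$ at a leaf, take a deepest vertex $v$ with $|T_v|\ge 3$, observe that $v$ has at most two children with subtrees of order at most $2$, so $3\le |T_v|\le 5$ and $T_v$ contains a $P_3$, then delete $T_v$ and apply induction to the remaining tree. Two small points do not affect the argument, since the inductive step only concerns $m\ge 5$: the opening discussion of disconnected remainders is unnecessary, and your existence remark should say $m\ge 4$ rather than $m\ge 3$, because for $m=3$ the root's child has a subtree of order only $2$.
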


\begin{proof}
We argue by induction on $m$. The claim is trivial for $m<5$.
Root $T$ at a leaf. For a vertex $v$, let $T_v$ denote the rooted
subtree induced by $v$ and all its descendants. Choose $v$ as far from the root
as possible subject to $|T_v|\ge 3$. Then every child-subtree of $v$ has order
at most $2$. Since the root is a leaf and $m\ge 5$, the vertex $v$ is not the
root. Hence $v$ has at most two children, because $\Delta(T)\le 3$. 
Therefore $3\le |T_v|\le 1+2+2=5$, see \Cref{fig:tv}.
The subtree $T_v$ is connected and has at least three vertices, so it contains
a copy of $P_3$.

Let $s=|T_v|$. Delete $T_v$ from $T$. The remaining graph is either empty or a
tree $T'$ with maximum degree at most $3$, and $|T'|=m-s$. By induction, $T'$
contains a $P_3$-packing of size at least $\left\lfloor \frac{m-s}{5}\right\rfloor$.
Adding one copy of $P_3$ from $T_v$, we get a packing of size at least $1+\left\lfloor \frac{m-s}{5}\right\rfloor $.
Since $s\le 5$,
\[
1+\left\lfloor \frac{m-s}{5}\right\rfloor
\ge
\left\lfloor \frac m5\right\rfloor .
\]
This proves the lemma.
\end{proof}

\begin{figure}[H]
    \centering
    \includegraphics[scale=0.8]{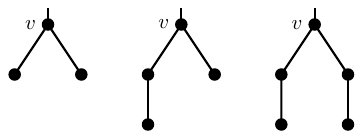}
    \caption{Possible cases for $T_v$}
    \label{fig:tv}
\end{figure}

We now combine the two previous lemmas to obtain the desired lower bound for triangulations.

\begin{thm}\label{thm:p3-packing-lower}
Let $G$ be a triangulation of order $n$. Then $\lambda(G)\ge \left\lfloor \frac n5 \right\rfloor$.
\end{thm}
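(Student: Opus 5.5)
The plan is to combine Barnette's theorem with \Cref{lem:packing}, after first disposing of the small cases where $G$ is not $3$-connected.

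\textbf{Small cases.} A plane triangulation on $n\ge 4$ vertices is $3$-connected. For $n\le 3$ we have $\lfloor n/5\rfloor=0$, so there is nothing to prove. The same holds for $n=4$, where $G=K_4$ and the bound is again $0$. So we may assume $n\ge 5$, and in particular that $G$ is $3$-connected and planar.

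\textbf{Reduce to a spanning tree.} Apply Barnette's theorem to $G$ to obtain a spanning tree $T$ of $G$ with $\Delta(T)\le 3$ and $|V(T)|=n$. Then apply \Cref{lem:packing} to $T$. This gives a family of at least $\lfloor n/5\rfloor$ pairwise vertex-disjoint subgraphs of $T$, each isomorphic to $P_3$.

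\textbf{Transfer to $G$.} Since $T$ is a subgraph of $G$ (indeed $E(T)\subseteq E(G)$ and $V(T)=V(G)$), each of these copies of $P_3$ is also a subgraph of $G$. They remain pairwise vertex-disjoint, so the family is a $P_3$-packing in $G$. Hence $\lambda(G)\ge\lfloor n/5\rfloor$.

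The only point needing care is the hypothesis check for Barnette's theorem, namely that triangulations on at least four vertices are $3$-connected. This is standard: a $2$-cut in a triangulation would give a face whose boundary is not a triangle. All the substantive work is already contained in the two lemmas.
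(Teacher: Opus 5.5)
Your proposal is correct and follows the paper's proof essentially verbatim. Both arguments dispose of the trivial small orders, use the $3$-connectivity of triangulations on at least four vertices to invoke Barnette's theorem for a spanning tree of maximum degree at most $3$, apply \Cref{lem:packing}, and transfer the resulting packing from the spanning tree to $G$; your separate treatment of $n=4$ is harmless.
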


\begin{proof}
If $n=3$, then the claim is trivial, since
$\left\lfloor 3/5\right\rfloor=0$. 
Hence assume $n\ge 4$.
Since every triangulation of order at least $4$ is $3$-connected, Barnette's
theorem gives a spanning tree $T$ of $G$ with $\Delta(T)\le 3$.
By \Cref{lem:packing}, the tree $T$ contains a $P_3$-packing of size at least
\[
\left\lfloor \frac{|V(T)|}{5}\right\rfloor
=
\left\lfloor \frac n5\right\rfloor .
\]
Since $T$ is a subgraph of $G$, this is also a $P_3$-packing in $G$. 
Therefore $\lambda(G)\ge \left\lfloor \frac n5\right\rfloor$.
\end{proof}

We first record a simple matching fact that will be used in the construction.
The point is that we will later need to associate to each original vertex
$v\in V(H)$ a distinct face incident with $v$. This allows us to place new
vertices in separate faces, so that the copies of $P_3$ built from different
original vertices are vertex-disjoint.

\begin{lem}{\rm \cite[Lemma 24]{GoddardHenning2016}}\label{lem:hall}
Let $H$ be a simple plane triangulation with $|V(H)|=k\ge 4$.
Then one can choose, for every vertex $v\in V(H)$, a face $f_v$ incident
with $v$ such that the faces $f_v$ are pairwise distinct.
\end{lem}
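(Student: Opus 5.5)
We need to prove that in a plane triangulation $H$ with $k\ge 4$ vertices one can assign to each vertex $v$ a distinct incident face $f_v$. We treat this as a system of distinct representatives in the bipartite vertex–face incidence graph $B$, with parts $V(H)$ and $F(H)$, where $v\sim f$ iff $v$ lies on $f$. By Hall's theorem it suffices to show $|N_B(S)|\ge |S|$ for every $S\subseteq V(H)$, where $N_B(S)$ is the set of faces incident with at least one vertex of $S$.

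By Euler's formula $H$ has $2k-4$ faces, and every vertex has degree at least $3$ since $k\ge 4$ and $H$ is $3$-connected. Hence each vertex lies on at least $3$ faces, and each face contains exactly $3$ vertices. The naive double count is $3|S|\le \sum_{v\in S} d(v)\le 3|N_B(S)|$. This only yields $|N_B(S)|\ge |S|$ when every face of $N_B(S)$ meets $S$ in a single vertex, so it is not enough on its own. Instead we argue planarly.

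Fix $S$ and let $F_S=N_B(S)$. If $S=V(H)$, then $|F_S|=2k-4\ge k$ because $k\ge 4$, so we may assume $S\ne V(H)$. Consider the plane subgraph $H'$ of $H$ formed by the faces of $F_S$, that is, the union of their boundary triangles. Every vertex $v\in S$ has all $d(v)$ incident faces in $F_S$, so $v$ is an interior point of the region $R$ covered by these faces. Apply Euler's formula to the plane graph $H'$, whose bounded-by-$R$ faces are exactly $F_S$ and whose remaining faces are the components of the complement of $R$ (holes). Write $V'$, $E'$ for its vertex and edge sets, $h\ge 1$ for the number of holes, and $c$ for the number of components. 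Edge counting gives $3|F_S| = 2E' - b$, where $b$ is the number of boundary edges (edges on a hole). Euler gives $|V'|-E'+|F_S|+h = 1+c$.

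The boundary vertices are exactly $V'\setminus S$ (interior vertices of $R$ lie in $S$ up to harmless extra interior vertices, which only help), and there are at most $b$ of them. Eliminating $E'$ gives $|F_S| = 2|V'| - b +2h-2-2c \ge 2|S| + 2(|V'|-|S|) - b + 2h - 2c - 2$. Each hole boundary is a closed walk, so $|V'\setminus S|\ge$ roughly $b$ counted per hole, and each component carries at least one hole. From this we aim to derive $|F_S|\ge 2|S|+(\text{nonnegative})-2 \ge |S|$ whenever $|S|\ge 2$. The cases $|S|=1$ (at least $3$ faces) and $S=V(H)$ are immediate.

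The main obstacle is bookkeeping the boundary, since hole boundaries may be non-simple walks with repeated vertices. If this becomes messy, a cleaner fallback is the following. Contract or delete $V\setminus S$ and note that $F_S$ contains the faces of a triangulated disk-like region in which the vertices of $S$ are interior. The standard inequality that a triangulated region with $i$ interior and $b$ boundary vertices has $2i+b-2$ triangles then gives $|F_S|\ge 2|S|+b-2\ge |S|$, using $b\ge 3$ per component. Once Hall's condition is verified, a perfect matching of $V(H)$ into $F(H)$ exists, and it provides the required distinct faces $f_v$.
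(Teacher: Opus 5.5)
The paper does not prove this lemma; it imports it from Goddard and Henning, so your argument has to stand on its own. Setting it up as Hall's condition in the vertex--face incidence graph $B$ is the right framework. There is a genuine gap, but it is self-inflicted: the argument you rely on to verify Hall's condition does not go through, while the inequality you discarded already settles it. You wrote $3|S|\le\sum_{v\in S}d(v)\le 3|N_B(S)|$ and claimed this gives $|N_B(S)|\ge|S|$ only when each face of $N_B(S)$ meets $S$ once. That is a misreading. Dividing by $3$ gives $|N_B(S)|\ge|S|$ for every $S\subseteq V(H)$, with no extra hypothesis. The chain is valid because each $v\in S$ lies on exactly $d(v)\ge 3$ faces, all of them in $N_B(S)$, and each face has only three vertices, so it absorbs at most three of these incidences. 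The hypothesis you mention would be needed only for the much stronger bound $|N_B(S)|\ge 3|S|$. So Hall's condition holds and the lemma follows in two lines; even the case $S=V(H)$ needs no separate treatment.

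The Euler-formula detour that replaces this is not a proof as written. Boundary edges have both endpoints outside $S$, since an endpoint in $S$ would put both faces at that edge into $F_S$. A vertex outside $S$ can lie on many holes, so the step ``$|V'\setminus S|\ge$ roughly $b$'' fails. Concretely, take an Eulerian triangulation on $m$ vertices (for example a bipyramid over an even cycle), stack a degree-$3$ vertex into every face of one colour class of its face $2$-colouring, and let $S$ be the $m-2$ new vertices. Then $b=3m-6$ and $|V'\setminus S|=m$, so $2|V'\setminus S|-b=6-m<0$ once $m>6$. The bracket you want to be nonnegative is rescued only by the term $2h=2(m-2)$, which your sketch never bounds. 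The fallback is worse, because the inequality $|F_S|\ge 2|S|+b-2$ is false for pinched regions. In the octahedron, stack $s_1,s_2$ into two faces $wx_1x_2$ and $wx_3x_4$ that share only $w$, and take $S=\{s_1,s_2\}$. Then $|F_S|=6$, but with $b=5$ boundary vertices the formula gives $2|S|+b-2=7$. In addition, ``deleting $V\setminus S$'' would remove exactly the boundary vertices of the triangles you are counting. Drop the detour and use the double count.
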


\begin{lem}\label{lem:p3-tight-family}
For every integer $k\ge 4$, there exists a triangulation $G_k$ of order
$n=5k-8$ such that
\[
\lambda(G_k)=k=\frac{n+8}{5}.
\]
\end{lem}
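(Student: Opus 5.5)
The plan is to build $G_k$ from a base triangulation by inserting a bounded number of new vertices into every face. The upper bound will come from the fact that the new vertices only form tiny components. The lower bound will come from \Cref{lem:hall}.

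\textbf{Construction.} Fix any simple plane triangulation $H$ with $|V(H)|=k\ge 4$. By Euler's formula it has exactly $2k-4$ faces. Into each face $f=abc$ of $H$ insert a new vertex $x_f$ adjacent to $a,b,c$. This splits $f$ into three triangles; into one of them, say $abx_f$, insert a further new vertex $y_f$ adjacent to $a,b,x_f$. The result $G_k$ is again a plane triangulation, and its order is
\[
n=k+2(2k-4)=5k-8,
\]
so $k=(n+8)/5$.

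\textbf{Upper bound $\lambda(G_k)\le k$.} The new vertices inside distinct faces of $H$ lie in different regions bounded by edges of $H$, so no edge joins them. Hence $G_k-V(H)$ is a disjoint union of the edges $x_fy_f$, and each component has exactly $2$ vertices. A copy of $P_3$ is connected on $3$ vertices, so it cannot lie inside $G_k-V(H)$; every copy therefore contains a vertex of $H$. In a packing the copies are vertex-disjoint, so there are at most $|V(H)|=k$ of them.

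\textbf{Lower bound $\lambda(G_k)\ge k$.} Apply \Cref{lem:hall} to $H$, which is valid since $k\ge 4$. This gives pairwise distinct faces $f_v$ with $v$ incident to $f_v$. For each $v\in V(H)$ take the path $v\,x_{f_v}\,y_{f_v}$. It is a path in $G_k$ because $x_{f_v}$ is adjacent to every corner of $f_v$, including $v$, and $x_{f_v}y_{f_v}$ is an edge. These $k$ paths are vertex-disjoint: their original vertices are distinct, and their new vertices lie in distinct faces.

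\textbf{Main obstacle.} The only delicate point is that a single face cannot serve two original vertices. This is exactly what \Cref{lem:hall} settles, so once that lemma is available the rest is routine.
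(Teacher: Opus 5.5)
Your proposal is correct and matches the paper's proof: the same construction with two new vertices in each face of an arbitrary $k$-vertex triangulation $H$, the same observation that $G_k-V(H)$ is a disjoint union of edges so every $P_3$ meets $V(H)$, and the same use of \Cref{lem:hall} for the lower bound. The only difference is harmless: you centre each path at $x_{f_v}$, which is adjacent to all three corners of $f_v$, so you do not need the paper's extra step of placing $y_{f_v}$ adjacent to $v$.
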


\begin{proof}
Let $H$ be any triangulation on $k$ vertices, and let $X=V(H)$.
Since $H$ is a triangulation, it has $2k-4$ faces.
We first choose, for every vertex $v\in X$, a distinct face $f_v$ of $H$
incident with $v$, which is possible by \Cref{lem:hall}.
Now modify each face $f=abc$ of $H$ as follows. 
We add a new edge $x_fy_f$ inside $f$ and join it to $a,b,c$. 
If $f=f_v$ is the face assigned to
$v$, we choose the position of $y_f$ so that $y_f$ is adjacent to $v$.
Hence, for every $v\in X$, the vertices $x_{f_v},v,y_{f_v}$ contain a copy of $P_3$, see \Cref{fig:triangulation-construction} for an illustration.
Let the resulting graph be $G_k$. 
Since every face of $H$ was replaced by a
triangulated patch, $G_k$ is again a triangulation. Its number of vertices is
\[
|V(G_k)|
=
|V(H)|+2|F(H)|
=
k+2(2k-4)
=
5k-8.
\]
We now show that $\lambda(G_k)=k$.
First, the $k$ paths using the triples
$\{x_{f_v},v,y_{f_v}\}$, where $v\in X$, are pairwise vertex-disjoint,
because the faces $f_v$ are distinct. Therefore $\lambda(G_k)\ge k$.

Conversely, delete the original vertex set $X$. In each original face $f$ of
$H$, only the two vertices $x_f$ and $y_f$ remain, and they form a single edge.
There are no edges between vertices inserted in different faces. Hence
$G_k-X$ is a disjoint union of edges, and therefore contains no copy of $P_3$.
Thus every copy of $P_3$ in $G_k$ must contain at least one vertex of $X$.
Since the copies in a $P_3$-packing are vertex-disjoint, any such packing has
size at most $|X|=k$. Hence $\lambda(G_k)\le k$.
Combining the two inequalities gives $\lambda(G_k)=k$. Since
$|V(G_k)|=5k-8$, we have
\[
\lambda(G_k)=k=\frac{|V(G_k)|+8}{5}.
\]
This gives an infinite family of triangulations with $P_3$-packing number
asymptotic to $n/5$.
\end{proof}

\begin{figure}[H]
\centering
\begin{tikzpicture}[
  x=1cm,y=1cm,
  every node/.style={font=\small},
  orig/.style={circle,fill=black,inner sep=1.6pt},
  xv/.style={circle,draw=blue!60!black,fill=blue!12,inner sep=1.5pt},
  yv/.style={circle,draw=orange!80!black,fill=orange!22,inner sep=1.5pt},
  oldedge/.style={black,line width=.7pt},
  newedge/.style={blue!55!black,line width=.45pt},
  p3edge/.style={red!75!black,line width=1.1pt},
  facelabel/.style={font=\scriptsize,fill=white,inner sep=1pt}
]

\begin{scope}
  \node[font=\large] at (0,3.35) {$H=K_4$};

  \coordinate (a) at (0,2.3);
  \coordinate (b) at (-2.25,-1.45);
  \coordinate (c) at (2.25,-1.45);
  \coordinate (d) at (0,-0.20);

  \draw[oldedge] (a)--(b)--(c)--cycle
                 (a)--(d)--(b)
                 (d)--(c);

  \node[orig,label=above:$a$] at (a) {};
  \node[orig,label=below left:$b$] at (b) {};
  \node[orig,label=below right:$c$] at (c) {};
  \node[orig,label=above right:$d$] at (d) {};

\end{scope}

\begin{scope}[shift={(6.3,0)}]
  \node[font=\large,fill=white,inner sep=1pt] at (1.35,3.35) {$G_4$};

  \coordinate (a) at (0,2.3);
  \coordinate (b) at (-2.25,-1.45);
  \coordinate (c) at (2.25,-1.45);
  \coordinate (d) at (0,-0.20);

  \coordinate (xa) at (0,4.00);
  \coordinate (ya) at (0.35,2.78);

  \coordinate (xb) at (-0.75,0.22);
  \coordinate (yb) at (-1.05,-0.60);

  \coordinate (xc) at (0,-1.05);
  \coordinate (yc) at (0.78,-0.92);

  \coordinate (xd) at (0.75,0.22);
  \coordinate (yd) at (0.25,0.78);

  \draw[newedge] (xa)--(a) (xa)--(b) (xa)--(c)
                 (ya)--(xa) (ya)--(a) (ya)--(c);

  \draw[newedge] (xb)--(a) (xb)--(b) (xb)--(d)
                 (yb)--(xb) (yb)--(b) (yb)--(d);

  \draw[newedge] (xc)--(b) (xc)--(c) (xc)--(d)
                 (yc)--(xc) (yc)--(c) (yc)--(d);

  \draw[newedge] (xd)--(c) (xd)--(a) (xd)--(d)
                 (yd)--(xd) (yd)--(d) (yd)--(a);

  \draw[oldedge] (a)--(b)--(c)--cycle
                 (a)--(d)--(b)
                 (d)--(c);

  \draw[p3edge] (xa)--(a)--(ya);
  \draw[p3edge] (xb)--(b)--(yb);
  \draw[p3edge] (xc)--(c)--(yc);
  \draw[p3edge] (xd)--(d)--(yd);

  \node[orig,label=above left:$a$] at (a) {};
  \node[orig,label=below left:$b$] at (b) {};
  \node[orig,label=below right:$c$] at (c) {};
  \node[orig,label=left:$d$] at (d) {};

  \node[xv,label=above:$x_a$] at (xa) {};
  \node[yv,label=right:$y_a$] at (ya) {};

  \node[xv,label=above left:$x_b$] at (xb) {};
  \node[yv,label=left:$y_b$] at (yb) {};

  \node[xv,label=below:$x_c$] at (xc) {};
  \node[yv,label=below right:$y_c$] at (yc) {};

  \node[xv,label=above right:$x_d$] at (xd) {};
  \node[yv,label=right:$y_d$] at (yd) {};
\end{scope}

\end{tikzpicture}
\caption{A triangulation $H=K_4$ in the left and the triangulation $G_4$  in the right obtained by adding two vertices in each face. The red paths are the four disjoint copies of $P_3$, namely $x_{f_v}vy_{f_v}$ for $v\in V(H)$.}
\label{fig:triangulation-construction}
\end{figure}

\begin{op}\label{ques:complexity-p3-packing}
Determine the computational complexity of maximum $P_3$-packing in plane
triangulations.
More precisely, given a plane triangulation $G$ and an integer $k$, decide whether $\lambda(G)\ge k$.
Is this problem solvable in polynomial time, or is it NP-complete even for restricted classes of triangulations, such as 
Eulerian triangulations, or triangulations of bounded maximum degree?
\end{op}

\begin{thm}\label{cor:p3-packing-linear}
Let $G$ be a plane triangulation of order $n$. Then a $P_3$-packing of size at least $\left\lfloor \frac n5\right\rfloor$ can be found in linear time.
\end{thm}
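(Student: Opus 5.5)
The plan is to make both ingredients of \Cref{thm:p3-packing-lower} algorithmic: first compute a spanning tree $T$ of $G$ with $\Delta(T)\le 3$ in linear time, then run the greedy peeling procedure from the proof of \Cref{lem:packing} on $T$ in linear time. Since $T$ is a spanning subgraph of $G$, the output is a $P_3$-packing of $G$ of size at least $\lfloor n/5\rfloor$. The case $n=3$ is trivial, so assume $n\ge 4$; then $G$ is $3$-connected.

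\textbf{Spanning tree.} Barnette's theorem is existential, so for the first step I would invoke a known linear-time constructive version. Strothmann's algorithm, and the approach via canonical orderings of triangulations, compute a spanning tree of maximum degree $3$ in a $3$-connected planar graph in $O(n)$ time. For triangulations one can also use a Schnyder wood or canonical ordering, computable in $O(n)$ time, and extract the tree from it. I expect this to be the \textbf{main obstacle}: the paper's proof uses Barnette's theorem only existentially, so I must rely on a citable linear-time algorithm rather than derive one. If no clean citation is available, my fallback is to argue directly from a canonical ordering $v_1,\dots,v_n$. Each $v_k$ ($k\ge 3$) is attached to the leftmost neighbour on the current outer path. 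I would then need to check that some attachment rule keeps every degree at most $3$, which is the delicate part and the reason I prefer a citation.

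\textbf{Peeling in linear time.} Root $T$ at a leaf $r$ and compute a postorder traversal together with the subtree sizes $|T_v|$, all in $O(n)$ time. Instead of repeatedly searching for the deepest vertex $v$ with $|T_v|\ge 3$, I process vertices in postorder while maintaining a \emph{residual size} $s(v)$. This is $1$ plus the sum of $s(c)$ over the children $c$ not yet removed. When the postorder reaches $v$ and $s(v)\ge 3$ (and $v\ne r$), every remaining child subtree has residual size at most $2$, since otherwise it would already have been cut. Moreover, $v$ has at most two children, so $3\le s(v)\le 5$. We then output a $P_3$ inside the residual subtree of $v$, found in constant time by local inspection: either $v$ has a child with a child, or $v$ has two children. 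We cut this residual subtree and set $s(v)=0$ for its parent's accounting. Finally, whatever remains attached to $r$ has fewer than $5$ vertices, or contains a $P_3$ that we take once more.

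This is exactly the induction of \Cref{lem:packing} unrolled. Each step removes at most $5$ vertices and gains one path, and the tree minus the removed part remains a subcubic tree rooted at the leaf $r$. The same counting $1+\lfloor (m-s)/5\rfloor\ge\lfloor m/5\rfloor$ therefore yields at least $\lfloor n/5\rfloor$ paths. Each vertex is handled a constant number of times, so this phase runs in $O(n)$ time, and combined with the first step the whole algorithm is linear.
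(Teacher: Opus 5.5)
Your proposal is correct and essentially the same as the paper's proof. The paper also invokes a linear-time canonical-ordering construction of a spanning tree of maximum degree $3$ (Biedl's Theorem~1). It then performs the same postorder peeling: every discarded residual subtree has at most $5$ vertices and fewer than three vertices survive, so $n\le 5q+2$ and $q\ge\lfloor n/5\rfloor$.

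Rely on the citation rather than your fallback. Attaching each $v_k$ to its leftmost contour neighbour does not bound degrees: $v_1$ is always the leftmost contour vertex, so it would become the parent of every later neighbour of $v_1$.
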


\begin{proof}
If $n=3$, then the empty packing has the required size. Hence assume
$n\ge 4$. 
Since every triangulation of order at least $4$ is
$3$-connected, we  apply the canonical-ordering construction of
\cite[Theorem~1]{Biedl2013}. This gives, in linear time, a spanning tree
$T$ of $G$ with $\Delta(T)\le 3$.
It remains to find the packing in $T$ in linear time. Root $T$ at a leaf and process the vertices in postorder. 
For each vertex $v$, keep the connected residual subtree consisting of $v$ together with the residual subtrees returned by its children. 
Since $\Delta(T)\le 3$ and the root is a leaf, every
non-root vertex has at most two children, and each returned residual subtree has size at most $2$. Thus the residual subtree considered at any vertex has size at most $1+2+2=5$.
Whenever this residual subtree has at least three vertices, choose a copy of $P_3$ inside it, add this copy to the packing, and discard the whole residual subtree. If it has at most two vertices, return it to the parent.

The chosen copies of $P_3$ are pairwise vertex-disjoint. Moreover, each chosen
copy is charged to a discarded connected subtree of size at most $5$, and at the end fewer than three vertices remain. Therefore, if the algorithm outputs $q$ copies of $P_3$, then $n\le 5q+2$.
Hence
\[
q\ge \left\lceil \frac{n-2}{5}\right\rceil
\ge
\left\lfloor \frac n5\right\rfloor .
\]
Since $T$ is a subgraph of $G$, this is also a $P_3$-packing in $G$.
The construction of $T$ is linear, and the postorder packing procedure also examines each vertex and edge only a constant number of times. 
Thus the whole algorithm runs in linear time.
\end{proof}
\subsection{\texorpdfstring{$P_3$-factors}{P3-factors}}

A \defin{$P_3$-factor} of a graph $G$ is a $P_3$-packing that covers all
vertices of $G$. Equivalently, it is a spanning subgraph whose connected
components are all isomorphic to $P_3$.

By applying Hall's theorem to a bipartite graph obtained by replacing each
candidate centre by two copies, we get the following characterization.

\begin{thm}\label{thm:p3-factor-hall}
Let $G$ be a graph of order $n$ with $3\mid n$. Then $G$ has a $P_3$-factor if
and only if there exists a set $C\subseteq V(G)$ with $|C|=n/3$ such that, for
every $X\subseteq V(G)\sm C$, we have $2|N_G(X)\cap C|\ge |X|$.
\end{thm}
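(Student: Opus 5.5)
The plan is to prove both directions by building the bipartite graph the paper alludes to, and then applying Hall's theorem to it. Throughout we use that in a $P_3$-factor with $n/3$ paths, the set $C$ of centres has exactly $n/3$ elements. Every other vertex is an endpoint adjacent to its path's centre, and each centre carries exactly two endpoints.

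\emph{Necessity.} Suppose $G$ has a $P_3$-factor and let $C$ be its set of centres. Then $|C|=n/3$. For $X\subseteq V(G)\sm C$, assign to each $x\in X$ the centre of the path containing it. This centre lies in $N_G(x)\cap C$. The assignment is at most two-to-one, since each centre has exactly two endpoints. Hence $|X|\le 2|N_G(X)\cap C|$.

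\emph{Sufficiency.} Fix $C$ as in the hypothesis and put $L=V(G)\sm C$, so $|L|=2n/3$. Form a bipartite graph $B$ with left side $L$ and right side $C\times\{1,2\}$. Join $x\in L$ to both copies $(c,1),(c,2)$ whenever $xc\in E(G)$.

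For $X\subseteq L$ we have $N_B(X)=(N_G(X)\cap C)\times\{1,2\}$. Its size is $2|N_G(X)\cap C|\ge|X|$. By Hall's theorem, $B$ has a matching saturating $L$.

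Since $|L|=2n/3=|C\times\{1,2\}|$, this matching is perfect. Thus each $c\in C$ receives exactly two distinct vertices $x,y\in L$, both adjacent to $c$ in $G$, and $x c y$ is a copy of $P_3$. These paths are pairwise vertex-disjoint, because the matching is injective on $L$ and the centres are distinct. They cover $C\cup L=V(G)$, so they form a $P_3$-factor.

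The only subtle step is checking that the factor's paths are subgraphs that need not be induced, so an edge $xy$ does no harm. The definition of $H$-packing uses subgraphs, so this is fine. I expect no real obstacle. The key observation is the counting identity $|L|=2|C|$, which forces the Hall matching to be perfect and hence gives every centre exactly two endpoints.
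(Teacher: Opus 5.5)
Your proof is correct and follows essentially the paper's argument: the same bipartite graph with two copies of each centre, Hall's theorem, and the identity $|V(G)\sm C|=2|C|$ forcing the matching to be perfect. The differences are only cosmetic. You check necessity by a direct two-to-one counting instead of going through a perfect matching and the easy direction of Hall. You also read off the paths $xcy$ directly, whereas the paper uses an edge-counting argument on the components of the spanning subgraph $F$.
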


\begin{proof}
First suppose that $G$ has a $P_3$-factor $\mathcal P$. In each copy of
$P_3$, choose the middle vertex as its centre, and let $C$ be the set of all
centres. Set $D\coloneqq V(G)\sm C$. Then $|C|=n/3$ and 
$|D|=2|C|$.
Construct a bipartite graph $B$ with bipartition $(D,\widetilde C)$, where
$\widetilde C$ contains two copies $c^{(1)}$ and $c^{(2)}$ of each vertex
$c\in C$. Join $d\in D$ to both copies of $c$ whenever $dc\in E(G)$. Each
path $xcy$ in $\mathcal P$, with centre $c$, gives two matching edges from
$x$ and $y$ to the two copies of $c$. Hence $B$ has a perfect matching. By
Hall's theorem, for every $X\subseteq D$, $|X|\le |N_B(X)|$.
But $N_B(X)$ consists of the two copies of the vertices in $N_G(X)\cap C$.
Therefore $|N_B(X)|=2|N_G(X)\cap C|$,
and so $2|N_G(X)\cap C|\ge |X|$.

Conversely, suppose there exists $C\subseteq V(G)$ with $|C|=n/3$ such that $2|N_G(X)\cap C|\ge |X|$
for every $X\subseteq D\coloneqq V(G)\sm C$. 
Construct the same bipartite
graph $B$ with bipartition $(D,\widetilde C)$. Then, for every $X\subseteq D$,
we obtain $|N_B(X)|=2|N_G(X)\cap C|\ge |X|$.
Thus Hall's condition holds, so $B$ has a matching saturating $D$. Since
$|D|=2|C|=|\widetilde C|$, this matching is perfect.
Now identify the two copies of each vertex $c\in C$. If the copies of $c$ are
matched to $d_1,d_2\in D$, put the edges $cd_1$ and $cd_2$ into a spanning
subgraph $F$ of $G$. 
Then every vertex of $C$ has degree $2$ in $F$, and every
vertex of $D$ has degree $1$ in $F$.
Let $Q$ be a connected component of $F$. 
We write $k\coloneqq|V(Q)\cap C|$, and  $\ell\coloneqq|V(Q)\cap D|$.
Counting edges in $Q$ from the $C$-side and from the $D$-side gives $e(Q)=2k=\ell$.
Since $Q$ is connected, $e(Q)\ge |V(Q)|-1=k+\ell-1$.
Substituting $\ell=2k$ gives $2k\ge 3k-1$,
so $k\le 1$. Since $Q$ contains an edge, $k\ge 1$. Hence $k=1$, and therefore
$\ell=2$. Thus every component of $F$ has exactly one vertex of $C$ and two
vertices of $D$, and is therefore a copy of $P_3$. Hence $F$ is a
$P_3$-factor of $G$.
\end{proof}

\section{\texorpdfstring{$K_3$-packings}{k3-packing}}

We now study packings by copies of $K_3$. Since every face of a triangulation is a triangle, facial triangles provide a natural source of candidate members for a triangle packing. This suggests encoding the interaction among facial triangles in an auxiliary graph, where independent sets correspond to families of pairwise vertex-disjoint triangles. Using this viewpoint, we derive several lower bounds on the triangle packing number in terms of degree parameters of the triangulation.

\begin{defn}
A \defin{$K_3$-packing} in a graph $G$ is a family of pairwise vertex-disjoint subgraphs of $G$, each isomorphic to $K_3$. We write $\nu_3(G)$ for the maximum size of a $K_3$-packing in $G$.
\end{defn}

\begin{defn}
Let $G$ be a plane graph. Define the graph $H$ as follows:
\begin{itemize}
    \item $V(H)=F(G)$, the set of faces of $G$;
    \item two vertices $f_1,f_2\in V(H)$ are adjacent in $H$ if and only if the corresponding faces of $G$ share a vertex.
\end{itemize}
\end{defn}

\begin{exa}
Here, the plane graph is drawn with gray edges, and the auxiliary graph $H$ is drawn with black edges; see \Cref{fig:aux}.

\begin{figure}[H]
    \centering
    \includegraphics[scale=0.7]{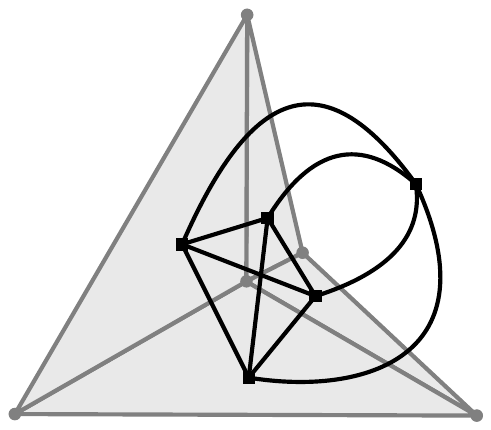}
    \caption{The plane graph $G$ in gray and the graph $H$ in black.}
    \label{fig:aux}
\end{figure}
\end{exa}

The advantage of this auxiliary graph is that it turns the packing problem into an independent set problem. 
Indeed, a set of facial triangles is pairwise vertex-disjoint in $G$ precisely when the corresponding vertices are independent in $H$. 
Thus any lower bound on $\alpha(H)$ immediately gives a lower bound on $\nu_3(G)$. 
In order to apply such a bound effectively, we first estimate the degrees in $H$ in terms of the degrees of the vertices of $G$.

\begin{lem}\label{lem:deg_H}
Let $G$ be a simple plane triangulation on $n>3$ vertices, and let
$\Delta=\Delta(G)$. If $f=xyz\in F(G)$, then $d_H(f)=d(x)+d(y)+d(z)-6$.
In particular, $\Delta(H)\le 3\Delta-6$.
\end{lem}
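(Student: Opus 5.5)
We count the faces that share at least one vertex with $f=xyz$, other than $f$ itself, by inclusion--exclusion over the three corners of $f$. For a vertex $v$ of a simple plane triangulation on $n>3$ vertices, the faces incident with $v$ are exactly $d(v)$ in number: the rotation at $v$ alternates $d(v)$ edges and $d(v)$ face-angles, and since $G$ is simple and $3$-connected each face meets $v$ in only one angle. So, with $F_v$ the set of faces incident with $v$, we have $|F_v|=d(v)$. The neighbours of $f$ in $H$ are the faces in $(F_x\cup F_y\cup F_z)\setminus\{f\}$.

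Inclusion--exclusion gives
\[
|F_x\cup F_y\cup F_z| = d(x)+d(y)+d(z)-|F_x\cap F_y|-|F_y\cap F_z|-|F_x\cap F_z|+|F_x\cap F_y\cap F_z| .
\]
The key step is to compute these intersections. A face containing both $x$ and $y$ contains the edge $xy$, since faces are triangles. In a simple triangulation each edge lies on exactly two faces, namely $f$ and the face $f_{xy}$ on the other side. Thus $|F_x\cap F_y|=2$, and similarly for the other two pairs.

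A face containing all of $x,y,z$ has boundary exactly the triangle $xyz$. For $n>3$ there is only one such face, namely $f$: if both sides of the cycle $xyz$ were faces, the graph would be $K_3$. So $|F_x\cap F_y\cap F_z|=1$. Consequently
\[
|F_x\cup F_y\cup F_z| = d(x)+d(y)+d(z)-6+1 ,
\]
and removing $f$ itself gives $d_H(f)=d(x)+d(y)+d(z)-6$.

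One must also check that $H$ has no multiple adjacencies, but this is automatic because $H$ is defined as a simple graph on $F(G)$. The bound $\Delta(H)\le 3\Delta-6$ then follows immediately from $d(x),d(y),d(z)\le\Delta$.

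The only delicate point is the claim $|F_v|=d(v)$ together with the uniqueness of the face $xyz$ and of the two faces on each edge. Each of these uses simplicity and $n>3$: since $G$ is $3$-connected, every facial cycle is a triangle through distinct vertices, so no face visits a vertex twice. I expect this justification to be the main, though minor, obstacle; the rest is bookkeeping.
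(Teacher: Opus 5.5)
Your proposal is correct and is essentially the paper's argument: the paper counts $(d(x)-1)+(d(y)-1)+(d(z)-1)$ faces other than $f$ at the three corners and subtracts $3$ for the faces across $xy$, $yz$, $zx$, each counted twice, which is your inclusion--exclusion written informally. Your version also makes explicit a point the paper uses tacitly, namely that for $n>3$ the face $f$ is the only face containing all of $x,y,z$, so the three faces across the edges are distinct and none is counted three times.
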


\begin{proof}
The vertex $x$ is incident with exactly $d(x)$ facial triangles, so there are
$d(x)-1$ facial triangles distinct from $f$ that contain $x$. Similarly, there
are $d(y)-1$ facial triangles distinct from $f$ that contain $y$, and
$d(z)-1$ facial triangles distinct from $f$ that contain $z$. Thus
\[
(d(x)-1)+(d(y)-1)+(d(z)-1)
\]
counts all neighbours of $f$ in $H$, except that the three facial triangles
adjacent to $f$ across the edges $xy$, $yz$, and $zx$ are each counted twice.
Therefore
\[
d_H(f)=(d(x)-1)+(d(y)-1)+(d(z)-1)-3
=d(x)+d(y)+d(z)-6.
\]
Since $d(x),d(y),d(z)\le \Delta$, we obtain $d_H(f)\le 3\Delta-6$
for every $f\in F(G)$, and hence $\Delta(H)\le 3\Delta-6$.
\end{proof}
We will now use this degree computation together with a standard lower bound on the independence number. The point is that the degrees in $H$ are not abstract.
In fact by~\Cref{lem:deg_H}, they are controlled by the degrees of the three vertices on the corresponding facial triangle of $G$.
Thus we use the well-known Caro--Wei bound, which states that every graph $G$ contains an independent set of size at least
\begin{equation}\label{caro}
\alpha(G) \ge \sum_{v \in V(G)} \frac{1}{d_G(v)+1}\ge \frac{|V(G)|}{\Delta(G)+1}.
\end{equation}
This result was obtained independently by Caro~\cite{Caro1979} and Wei~\cite{Wei1981}.

\begin{thm}\label{thm:degree-sensitive-k3-packing}
Let $G$ be a simple plane triangulation on $n>3$ vertices. Then
\[
\nu_3(G)\ge
\left\lceil
\sum_{xyz\in F(G)}
\frac{1}{d(x)+d(y)+d(z)-5}
\right\rceil .
\]
Consequently,
\[
\nu_3(G)\ge
\left\lceil
\frac{(2n-4)^2}
{\sum_{v\in V(G)}d(v)^2-5(2n-4)}
\right\rceil .
\]
\end{thm}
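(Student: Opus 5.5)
The plan is to apply the Caro--Wei bound \eqref{caro} to the auxiliary graph $H$ of faces and then translate the degrees of $H$ back into degrees of $G$ using \Cref{lem:deg_H}.

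\textbf{First bound.} Observe that an independent set in $H$ is a family of facial triangles of $G$ that are pairwise vertex-disjoint. Such a family is a $K_3$-packing in $G$, so $\nu_3(G)\ge \alpha(H)$. By \Cref{lem:deg_H}, every face $f=xyz$ satisfies $d_H(f)+1=d(x)+d(y)+d(z)-5$. Substituting this into the first inequality of \eqref{caro} gives
\[
\nu_3(G)\ge \alpha(H)\ge \sum_{xyz\in F(G)}\frac{1}{d(x)+d(y)+d(z)-5}.
\]
Since $\nu_3(G)$ is an integer, we may take the ceiling of the right-hand side. This proves the first claimed bound.

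\textbf{Second bound.} I would apply the Cauchy--Schwarz inequality in its harmonic--arithmetic form, $\sum_i 1/a_i\ge m^2/\sum_i a_i$ for positive reals $a_1,\dots,a_m$. Take the $a_i$ to be the quantities $a_f=d(x)+d(y)+d(z)-5$, indexed by the $m=|F(G)|=2n-4$ faces. Each $a_f$ is positive because $d\ge 3$ in a triangulation with $n>3$. It then remains to compute $\sum_f a_f$. Each vertex $v$ lies on exactly $d(v)$ facial triangles, so $v$ contributes $d(v)$ to the sum once for each of these faces. Double counting therefore gives $\sum_{xyz\in F(G)}\bigl(d(x)+d(y)+d(z)\bigr)=\sum_{v}d(v)^2$. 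Subtracting the constant $5$ once per face yields
\[
\sum_f a_f=\sum_v d(v)^2-5(2n-4).
\]
Plugging this into the harmonic--arithmetic inequality and taking the ceiling gives the second bound.

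\textbf{Main obstacle.} The only step requiring care is the double count identifying $\sum_f\bigl(d(x)+d(y)+d(z)\bigr)$ with $\sum_v d(v)^2$. It relies on the fact that in a simple triangulation with $n\ge 4$ each vertex is incident with exactly $d(v)$ distinct faces. This is the same fact already used in \Cref{lem:deg_H}, so I expect no real difficulty.
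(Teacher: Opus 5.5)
Your proposal is correct and follows the paper's proof essentially step for step. It applies Caro--Wei to the face-intersection graph $H$ via $d_H(f)+1=d(x)+d(y)+d(z)-5$, then uses the harmonic--arithmetic (Cauchy--Schwarz) inequality with the double count $\sum_{f}\bigl(d(x)+d(y)+d(z)\bigr)=\sum_v d(v)^2$; you also note explicitly that each $a_f>0$, which the paper leaves implicit.
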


\begin{proof}
Let $H$ be the auxiliary graph whose vertices are the facial triangles
of $G$, where two vertices of $H$ are adjacent whenever the corresponding
facial triangles of $G$ share a vertex. Every independent set in $H$
therefore gives a family of pairwise vertex-disjoint facial triangles in
$G$. 
Hence $\nu_3(G)\ge \alpha(H)$.
Let $f=xyz$ be a facial triangle of $G$. 
By \Cref{lem:deg_H}, we have $d_H(f)=d(x)+d(y)+d(z)-6$,
and so we get $d_H(f)+1=d(x)+d(y)+d(z)-5$.
Applying the Caro--Wei bound to $H$, we obtain
\[
\alpha(H)
\ge
\sum_{f\in F(G)}\frac{1}{d_H(f)+1}
=
\sum_{xyz\in F(G)}
\frac{1}{d(x)+d(y)+d(z)-5}.
\]
Since $\nu_3(G)$ is an integer, the first bound follows.

Now we define $a(f)\coloneqq d(x)+d(y)+d(z)-5$ for each face $f=xyz$. 
By Cauchy--Schwarz, we obtain
\[
\sum_{f\in F(G)}\frac{1}{a(f)}
\ge
\frac{|F(G)|^2}{\sum_{f\in F(G)}a(f)}.
\]
Since $G$ is a plane triangulation on $n>3$ vertices, $|F(G)|=2n-4$.
Moreover, we have
\[
\sum_{f=xyz\in F(G)}\bigl(d(x)+d(y)+d(z)\bigr)
=
\sum_{v\in V(G)}d(v)^2,
\]
because each vertex $v$ is incident with exactly $d(v)$ facial triangles, and
each such incidence contributes $d(v)$. Therefore
\[
\sum_{f\in F(G)}a(f)
=
\sum_{v\in V(G)}d(v)^2-5(2n-4).
\]
This proves
\[
\nu_3(G)\ge
\left\lceil
\frac{(2n-4)^2}
{\sum_{v\in V(G)}d(v)^2-5(2n-4)}
\right\rceil .
\]
\end{proof}

The previous theorem can be rewritten in terms of the degree sequence of $G$. This gives a more explicit bound, and also shows how vertices of small degree improve the estimate.

\begin{cor}\label{cor:degree-sequence-k3-packing}
Let $G$ be a simple plane triangulation on $n>3$ vertices, let
$\Delta=\Delta(G)$, and let $n_i\coloneqq|\{v\in V(G):d(v)=i\}|$
for every integer $i\ge 3$. 
Then we have 
\[
\nu_3(G)\ge
\left\lceil
\frac{(2n-4)^2}
{(2n-4)(3\Delta-5)-\sum_{i=3}^{\Delta} i(\Delta-i)n_i}
\right\rceil .
\]
In particular,
\[
\nu_3(G)\ge
\left\lceil
\frac{(2n-4)^2}
{(2n-4)(3\Delta-5)-3(\Delta-3)n_3-4(\Delta-4)n_4}
\right\rceil .
\]
\end{cor}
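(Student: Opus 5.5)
The plan is to start from the second bound of \Cref{thm:degree-sensitive-k3-packing},
\[
\nu_3(G)\ge\left\lceil\frac{(2n-4)^2}{\sum_{v}d(v)^2-5(2n-4)}\right\rceil,
\]
and to rewrite the denominator using the degree sequence. Since the numerator is fixed, it is enough to show
\[
\sum_{v\in V(G)}d(v)^2-5(2n-4)=(2n-4)(3\Delta-5)-\sum_{i=3}^{\Delta}i(\Delta-i)n_i .
\]
The denominator is positive, because it equals $\sum_f a(f)$ and each $a(f)\ge 4$. So the identity transfers the bound directly.

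To establish the identity, I will group vertices by degree: $\sum_v d(v)^2=\sum_{i=3}^{\Delta} i^2 n_i$. Every vertex of a simple triangulation with $n>3$ has degree at least $3$, so the range is right. I then write $i^2=\Delta i - i(\Delta-i)$, which gives
\[
\sum_{v}d(v)^2=\Delta\sum_i i\,n_i-\sum_i i(\Delta-i)n_i .
\]
The key input is the handshake lemma together with Euler's formula: $\sum_i i\,n_i=2|E(G)|=2(3n-6)=3(2n-4)$. Substituting, we get
\[
\sum_v d(v)^2-5(2n-4)=3\Delta(2n-4)-5(2n-4)-\sum_i i(\Delta-i)n_i,
\]
which is exactly the desired denominator $(2n-4)(3\Delta-5)-\sum_i i(\Delta-i)n_i$. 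This is just bookkeeping, and I do not expect any real obstacle here.

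For the ``in particular'' statement, every term $i(\Delta-i)n_i$ with $3\le i\le\Delta$ is nonnegative. Dropping all terms with $i\ge5$ can therefore only increase the denominator, and so it weakens the lower bound. This leaves $3(\Delta-3)n_3+4(\Delta-4)n_4$. The one care point is the case $\Delta=3$: then the $i=4$ term $4(\Delta-4)n_4$ is formally negative, but $n_4=0$, so the term vanishes and the inequality still holds. The denominator stays positive since it is bounded below by the exact one. Finally, $\lceil\cdot\rceil$ is monotone, so the weakened bound follows.
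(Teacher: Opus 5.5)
Your proposal is correct and matches the paper's proof: write $d(v)^2=\Delta d(v)-d(v)(\Delta-d(v))$, use $\sum_v d(v)=6n-12=3(2n-4)$ to turn the denominator into $(2n-4)(3\Delta-5)-\sum_{i=3}^{\Delta} i(\Delta-i)n_i$, and then drop the nonnegative terms with $i\ge 5$. Your checks that the denominator is positive (each $a(f)\ge 4$) and that the $\Delta=3$ case is harmless ($n_4=0$) are details the paper leaves implicit.
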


\begin{proof}
By \Cref{thm:degree-sensitive-k3-packing},
\[
\nu_3(G)\ge
\left\lceil
\frac{(2n-4)^2}
{\sum_{v\in V(G)}d(v)^2-5(2n-4)}
\right\rceil .
\]
Since $G$ is a plane triangulation,
\[
\sum_{v\in V(G)}d(v)=6n-12=3(2n-4).
\]
Therefore
\[
\begin{aligned}
\sum_{v\in V(G)}d(v)^2-5(2n-4)
&=
\Delta\sum_{v\in V(G)}d(v)
-\sum_{v\in V(G)}d(v)(\Delta-d(v))
-5(2n-4)\\
&=
(2n-4)(3\Delta-5)
-\sum_{v\in V(G)}d(v)(\Delta-d(v))\\
&=
(2n-4)(3\Delta-5)
-\sum_{i=3}^{\Delta} i(\Delta-i)n_i.
\end{aligned}
\]
This proves the first bound.
For the second bound, observe that we have
\[
\sum_{i=3}^{\Delta} i(\Delta-i)n_i
\ge
3(\Delta-3)n_3+4(\Delta-4)n_4.
\]
Thus the denominator in the first bound is at most
$(2n-4)(3\Delta-5)-3(\Delta-3)n_3-4(\Delta-4)n_4$,
and the stated weaker bound follows.
\end{proof}

\begin{cor}\label{cor:old-delta-k3-packing}
Let $G$ be a simple plane triangulation on $n>3$ vertices, and let
$\Delta=\Delta(G)$. Then
\[
\nu_3(G)\ge
\left\lceil
\frac{2n-4}{3\Delta-5}
\right\rceil .
\]
\end{cor}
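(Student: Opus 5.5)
The plan is to derive this directly from \Cref{cor:degree-sequence-k3-packing} (or, equivalently, from the Caro--Wei bound applied to the auxiliary graph $H$). I see two routes. The cleanest uses the first inequality of \Cref{thm:degree-sensitive-k3-packing}:
\[
\nu_3(G)\ge\left\lceil\sum_{xyz\in F(G)}\frac{1}{d(x)+d(y)+d(z)-5}\right\rceil .
\]

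The first step bounds each summand from below. For every facial triangle $xyz$ we have $d(x),d(y),d(z)\le\Delta$, so $d(x)+d(y)+d(z)-5\le 3\Delta-5$. This is positive, since $\Delta\ge 3$ in any triangulation on $n>3$ vertices. Hence each term is at least $1/(3\Delta-5)$.

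The second step sums over the faces. Since $|F(G)|=2n-4$, the sum is at least $(2n-4)/(3\Delta-5)$. Taking ceilings is monotone, so the claimed bound follows.

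Alternatively, in \Cref{cor:degree-sequence-k3-packing} every term $i(\Delta-i)n_i$ with $3\le i\le\Delta$ is nonnegative. Dropping them only enlarges the denominator, and the fraction then simplifies to $(2n-4)/(3\Delta-5)$. One could also note directly that $\Delta(H)\le 3\Delta-6$ by \Cref{lem:deg_H}, and apply the second form of \eqref{caro}.

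There is no real obstacle. The only points to check are that the denominator is positive, which holds because $\Delta\ge3$ gives $3\Delta-5\ge4$, and that ceilings preserve the inequality, which holds because $\nu_3(G)$ is an integer.
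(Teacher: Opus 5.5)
Your proposal is correct, and your main route is exactly the paper's proof: bound each summand in the first inequality of \Cref{thm:degree-sensitive-k3-packing} below by $1/(3\Delta-5)$, then use $|F(G)|=2n-4$. Your two alternative derivations are also valid, namely dropping the nonnegative terms $i(\Delta-i)n_i$ in \Cref{cor:degree-sequence-k3-packing}, or combining $\Delta(H)\le 3\Delta-6$ with the second form of the Caro--Wei bound.
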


\begin{proof}
For every facial triangle $xyz$, we have  $d(x)+d(y)+d(z)-5\le 3\Delta-5$.
Therefore \Cref{thm:degree-sensitive-k3-packing} gives
\[
\nu_3(G)
\ge
\left\lceil
\sum_{xyz\in F(G)}\frac{1}{d(x)+d(y)+d(z)-5}
\right\rceil
\ge
\left\lceil
\frac{|F(G)|}{3\Delta-5}
\right\rceil .
\]
Since $|F(G)|=2n-4$, the result follows.
\end{proof}

\begin{cor}\label{cor:improved-delta-k3-packing}
Let $G$ be a simple plane triangulation on $n>3$ vertices, and let
$\Delta=\Delta(G)$. Then
\[
\nu_3(G)\ge
\left\lceil
\frac{(2n-4)^2}
{(3\Delta+8)n-12\Delta-16}
\right\rceil .
\]
\end{cor}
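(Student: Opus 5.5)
The plan is to derive this directly from the second (Cauchy--Schwarz) bound of \Cref{thm:degree-sensitive-k3-packing},
\[
\nu_3(G)\ge \left\lceil \frac{(2n-4)^2}{\sum_{v\in V(G)}d(v)^2-5(2n-4)}\right\rceil .
\]
It then suffices to show that the denominator is at most $(3\Delta+8)n-12\Delta-16$. Since the numerator is fixed and the denominator is positive, this only weakens the bound.

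The key step is a pointwise quadratic estimate on each degree. In a simple plane triangulation on $n>3$ vertices every vertex satisfies $3\le d(v)\le \Delta$. Hence $(d(v)-3)(d(v)-\Delta)\le 0$, which rearranges to
\[
d(v)^2\le (\Delta+3)\,d(v)-3\Delta .
\]
Summing over all $n$ vertices and using $\sum_v d(v)=6n-12$ gives
\[
\sum_{v}d(v)^2\le (\Delta+3)(6n-12)-3\Delta n = 3\Delta n+18n-12\Delta-36 .
\]
Subtracting $5(2n-4)=10n-20$ then yields exactly $(3\Delta+8)n-12\Delta-16$, which is the claimed denominator.

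I expect no real obstacle here; the only thing to confirm is that the denominator is positive, so that replacing it by a larger quantity is legitimate. Positivity holds because $\sum_v d(v)^2-5(2n-4)=\sum_f a(f)$ and each $a(f)=d(x)+d(y)+d(z)-5\ge 4$. The final step is to take the ceiling, which preserves the inequality.
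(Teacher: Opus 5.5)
Your proposal is correct and follows essentially the same route as the paper: you bound $\sum_v d(v)^2$ via $(d(v)-3)(d(v)-\Delta)\le 0$, use $\sum_v d(v)=6n-12$, and plug the result into the Cauchy--Schwarz form of \Cref{thm:degree-sensitive-k3-packing}. You also check that the denominator is positive (each $a(f)\ge 4$), which the paper leaves implicit but which is needed for enlarging the denominator to be legitimate.
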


\begin{proof}
Since $G$ is a simple plane triangulation on $n>3$ vertices, every vertex
has degree at least $3$. Thus, for every $v\in V(G)$, we have $3\le d(v)\le \Delta$.
Hence we have $(d(v)-3)(d(v)-\Delta)\le 0$.
Expanding gives $d(v)^2\le (\Delta+3)d(v)-3\Delta$.
Summing over all vertices, we get 
\[
\sum_{v\in V(G)}d(v)^2
\le
(\Delta+3)\sum_{v\in V(G)}d(v)-3\Delta n.
\]
Since $G$ is a plane triangulation, we obtain 
\[
\sum_{v\in V(G)}d(v)=6n-12.
\]
Therefore
\[
\sum_{v\in V(G)}d(v)^2
\le
(\Delta+3)(6n-12)-3\Delta n.
\]
After subtracting $5(2n-4)$, we get
\[
\sum_{v\in V(G)}d(v)^2-5(2n-4)
\le
(3\Delta+8)n-12\Delta-16.
\]
Now apply \Cref{thm:degree-sensitive-k3-packing}.
\end{proof}

\begin{rmk}
The bound in \Cref{cor:improved-delta-k3-packing} is always at least as
strong as the bound in \Cref{cor:old-delta-k3-packing}. Indeed, the old
denominator, after putting the old bound over numerator $(2n-4)^2$, is $(2n-4)(3\Delta-5)$,
whereas the improved denominator is $(3\Delta+8)n-12\Delta-16$.
Their difference is
\[
(2n-4)(3\Delta-5)-\bigl((3\Delta+8)n-12\Delta-16\bigr)
=
3n(\Delta-6)+36.
\]
Since
\[
\Delta\ge \frac{1}{n}\sum_{v\in V(G)}d(v)
=
6-\frac{12}{n},
\]
we have $3n(\Delta-6)+36\ge 0$.
Thus the improved denominator is no larger than the old denominator, and so the improved bound is at least as strong.
\end{rmk}

We next record a variant for properly $3$-colored triangulations. In such a triangulation, every facial triangle contains exactly one vertex from each color class. Therefore the degree sum on a face can be bounded using the largest degree in each color class, rather than using $3\Delta(G)$.

\begin{thm}\label{thm:colored-k3-packing}
Let $G$ be a properly $3$-colored simple plane triangulation on $n>3$
vertices, with color classes $V_1,V_2,V_3$. For $i\in\{1,2,3\}$, define $\Delta_i\coloneqq\max\{d_G(v):v\in V_i\}$.
Then
\[
\nu_3(G)\ge
\left\lceil
\frac{2n-4}{\Delta_1+\Delta_2+\Delta_3-5}
\right\rceil .
\]
\end{thm}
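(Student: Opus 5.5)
The plan is to follow the template of \Cref{cor:old-delta-k3-packing} exactly, replacing the uniform bound $3\Delta$ on the degree sum of a face by a bound adapted to the coloring. Let $H$ be the auxiliary graph on the facial triangles of $G$ introduced above. As in the proof of \Cref{thm:degree-sensitive-k3-packing}, an independent set in $H$ is a family of pairwise vertex-disjoint facial triangles, so $\nu_3(G)\ge\alpha(H)$.

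The first step is the face-wise degree bound. Let $f=xyz$ be a facial triangle. Since $x,y,z$ are pairwise adjacent and the coloring is proper, they receive three distinct colors. After relabeling, $x\in V_1$, $y\in V_2$ and $z\in V_3$. Hence $d(x)\le\Delta_1$, $d(y)\le\Delta_2$ and $d(z)\le\Delta_3$. Combining this with \Cref{lem:deg_H} gives
\[
d_H(f)+1=d(x)+d(y)+d(z)-5\le \Delta_1+\Delta_2+\Delta_3-5 .
\]
This quantity is positive, since each $\Delta_i\ge 3$ (every vertex has degree at least $3$ when $n>3$). This also confirms that all three color classes are nonempty.

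Next we apply the first bound of \Cref{thm:degree-sensitive-k3-packing}, which is the Caro--Wei bound on $H$:
\[
\nu_3(G)\ge\left\lceil\sum_{xyz\in F(G)}\frac{1}{d(x)+d(y)+d(z)-5}\right\rceil\ge\left\lceil\frac{|F(G)|}{\Delta_1+\Delta_2+\Delta_3-5}\right\rceil .
\]
Finally, we substitute $|F(G)|=2n-4$, which holds for a plane triangulation on $n>3$ vertices.

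There is no real obstacle here. The only point needing care is the observation that every facial triangle meets each color class exactly once. This is immediate from properness, because the three vertices of a triangle are pairwise adjacent.
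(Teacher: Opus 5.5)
Your proposal is correct and matches the paper's proof essentially verbatim. Properness forces every facial triangle to meet each color class exactly once, so each Caro--Wei term from \Cref{thm:degree-sensitive-k3-packing} is at least $1/(\Delta_1+\Delta_2+\Delta_3-5)$, and $|F(G)|=2n-4$ finishes the argument. Your positivity remark is a harmless addition, though nonemptiness of the color classes actually comes from any face meeting all three classes, and that is what makes $\Delta_i\ge 3$ meaningful.
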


\begin{proof}
Every facial triangle of $G$ contains exactly one vertex from each color
class. Hence, if $f=v_1v_2v_3$ is a facial triangle with
$v_i\in V_i$, then $d(v_1)+d(v_2)+d(v_3)-5
\le
\Delta_1+\Delta_2+\Delta_3-5$.
By \Cref{thm:degree-sensitive-k3-packing},
\[
\nu_3(G)
\ge
\left\lceil
\sum_{v_1v_2v_3\in F(G)}
\frac{1}{d(v_1)+d(v_2)+d(v_3)-5}
\right\rceil .
\]
Therefore
\[
\nu_3(G)
\ge
\left\lceil
\frac{|F(G)|}{\Delta_1+\Delta_2+\Delta_3-5}
\right\rceil .
\]
Since $|F(G)|=2n-4$, the result follows.
\end{proof}

\begin{rmk}
The bound in \Cref{thm:colored-k3-packing} strengthens
\Cref{cor:old-delta-k3-packing} whenever $\Delta_1+\Delta_2+\Delta_3<3\Delta(G)$.
Indeed, since $\Delta_i\le \Delta(G)$ for each $i\in\{1,2,3\}$, we have $\Delta_1+\Delta_2+\Delta_3-5\le 3\Delta(G)-5$.
Therefore we obtain
\[
\left\lceil
\frac{2n-4}{\Delta_1+\Delta_2+\Delta_3-5}
\right\rceil
\ge
\left\lceil
\frac{2n-4}{3\Delta(G)-5}
\right\rceil .
\]
At the level of the underlying real-valued fractions, the improvement is
strict whenever
\[
\Delta_1+\Delta_2+\Delta_3<3\Delta(G).
\]
After applying ceilings, however, the two integer lower bounds may coincide.
\end{rmk}

The following example shows that the colored bound can genuinely improve the maximum-degree bound. 
The improvement occurs when the largest degrees are not spread across all three color classes.
\begin{exa}
For every integer $m\ge 3$, there exists a properly $3$-colored triangulation $G$ such that $\Delta_1+\Delta_2+\Delta_3<3\Delta(G)$.
Indeed, let $G$ be the bipyramid over the even cycle $C_{2m}$. 
Thus $V(G)=\{a,b\}\cup \{v_1,\dots,v_{2m}\}$,
where $v_1v_2\cdots v_{2m}v_1$ is a cycle, and each of $a,b$ is adjacent to every $v_i$. Then $G$ is a plane triangulation on $n=2m+2$ vertices.
Since $a$ and $b$ are non-adjacent and the cycle is even, $G$ has a proper $3$-coloring given by
\[
V_1=\{a,b\},\qquad
V_2=\{v_1,v_3,\dots,v_{2m-1}\},\qquad
V_3=\{v_2,v_4,\dots,v_{2m}\}.
\]

Now we have  $d_G(a)=d_G(b)=2m$, and every cycle vertex has degree $4$. 
Hence $\Delta(G)=2m$,
while $\Delta_1=2m$, $\Delta_2=4$,$\Delta_3=4$.
Therefore we have $\Delta_1+\Delta_2+\Delta_3=2m+8$.
Since $m\ge 3$, we obtain $2m+8<6m=3\Delta(G)$.
Thus we get $\Delta_1+\Delta_2+\Delta_3<3\Delta(G)$.

\begin{figure}[H]
\centering

\begin{minipage}[t]{0.48\textwidth}
    \centering
    \begin{tikzpicture}[scale=1.2, every node/.style={font=\small}]
      \coordinate (a) at (0,2.2);
      \coordinate (b) at (0,-2.2);
      \coordinate (v1) at (2,0);
      \coordinate (v2) at (1,1.732);
      \coordinate (v3) at (-1,1.732);
      \coordinate (v4) at (-2,0);
      \coordinate (v5) at (-1,-1.732);
      \coordinate (v6) at (1,-1.732);

      \draw[thick] (v1)--(v2)--(v3)--(v4)--(v5)--(v6)--(v1);

      \foreach \x in {v1,v2,v3,v4,v5,v6}{
        \draw[thick] (a)--(\x);
        \draw[thick] (b)--(\x);
      }

      \fill[blue!25]  (a) circle (2.2pt);
      \fill[blue!25]  (b) circle (2.2pt);

      \fill[red!35]   (v1) circle (2.2pt);
      \fill[green!35] (v2) circle (2.2pt);
      \fill[red!35]   (v3) circle (2.2pt);
      \fill[green!35] (v4) circle (2.2pt);
      \fill[red!35]   (v5) circle (2.2pt);
      \fill[green!35] (v6) circle (2.2pt);

      \node[above]       at (a) {$a$};
      \node[below]       at (b) {$b$};
      \node[right]       at (v1) {$v_1$};
      \node[above right] at (v2) {$v_2$};
      \node[above left]  at (v3) {$v_3$};
      \node[left]        at (v4) {$v_4$};
      \node[below left]  at (v5) {$v_5$};
      \node[below right] at (v6) {$v_6$};
    \end{tikzpicture}

    \medskip
    \small (a) The bipyramid over $C_6$ with proper $3$-coloring:
    {\color{blue!70!black}$V_1=\{a,b\}$},
    {\color{red!70!black}$V_2=\{v_1,v_3,v_5\}$},
    {\color{green!50!black}$V_3=\{v_2,v_4,v_6\}$}.
\end{minipage}
\hfill
\begin{minipage}[t]{0.48\textwidth}
    \centering
\includegraphics[scale=0.8]{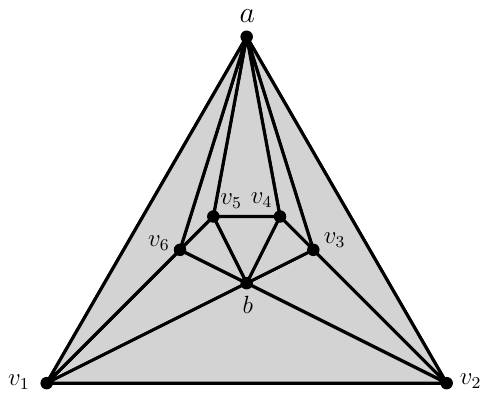}

    \medskip
    \small (b) A plane embedding of the same graph; every face is a triangle.
\end{minipage}

\caption{A small Eulerian triangulation with
$\Delta(G)=6$ and $(\Delta_1,\Delta_2,\Delta_3)=(6,4,4)$, so
$\Delta_1+\Delta_2+\Delta_3=14<18=3\Delta(G)$.}
\label{fig:bipyramid-c6}
\end{figure}
\end{exa}

\subsection{Algorithmic form of the packing bounds}

The lower bounds above are constructive. Let $G$ be a simple plane
triangulation on $n>3$ vertices, and let $H$ be the auxiliary graph whose
vertices are the facial triangles of $G$, with two vertices adjacent whenever
the corresponding faces share a vertex. Every independent set in $H$ gives a
family of pairwise vertex-disjoint facial triangles in $G$, and therefore gives
a $K_3$-packing in $G$.
We can find such a packing in polynomial time as follows. Starting with
$J\coloneqq H$, repeatedly choose a vertex $f$ of minimum degree in the current graph
$J$, add the corresponding facial triangle of $G$ to the packing, and delete
$f$ together with all its neighbours from $J$. The resulting set of chosen
vertices is independent in $H$.
Moreover, this greedy algorithm achieves the Caro--Wei bound used above. Let
\[
\Phi(J)\coloneqq\sum_{u\in V(J)}\frac{1}{d_J(u)+1}.
\]
Suppose the algorithm chooses a vertex $f$ of minimum degree $\delta$ in $J$,
and let $R=\{f\}\cup N_J(f)$.
Then $|R|=\delta+1$, and every vertex of $R$ has degree at least $\delta$.
Hence
\[
\sum_{u\in R}\frac{1}{d_J(u)+1}\le 1.
\]
After deleting $R$, all remaining degrees can only decrease. Thus the
Caro--Wei potential of the remaining graph is at least the contribution, in
$\Phi(J)$, of the vertices outside $R$. Therefore, by induction, the greedy algorithm outputs an independent set of size at least
\[
\sum_{f\in V(H)}\frac{1}{d_H(f)+1}.
\]
Using \Cref{lem:deg_H}, this gives a $K_3$-packing of size at least
\[
\sum_{xyz\in F(G)}
\frac{1}{d_G(x)+d_G(y)+d_G(z)-5}.
\]
Since the output size is an integer, it is at least
\[
\left\lceil
\sum_{xyz\in F(G)}
\frac{1}{d_G(x)+d_G(y)+d_G(z)-5}
\right\rceil .
\]
Consequently, the same algorithm constructively yields all the preceding
Caro--Wei-type lower bounds, such as
\[
\nu_3(G)\ge
\left\lceil \frac{2n-4}{3\Delta(G)-5}\right\rceil .
\]

The running time is polynomial. Indeed, $|V(H)|=|F(G)|=2n-4$, and $H$ can be
constructed by putting a clique on the set of facial triangles incident with
each vertex of $G$. Thus
\[
|E(H)|\le \sum_{v\in V(G)} \binom{d_G(v)}{2}=O(n^2).
\]
Hence the greedy algorithm can be implemented in $O(n^2)$ time.

The computational complexity of maximum triangle packing in plane triangulations
was asked as follows.

Given a plane triangulation $G$ and an integer $k$, decide whether
$\nu_3(G)\ge k$, where $\nu_3(G)$ denotes the maximum number of pairwise
vertex-disjoint triangles in $G$.

Is this problem polynomial-time solvable on plane triangulations, or is it
NP-complete? Is it NP-complete even for $4$-connected plane triangulations?

We answer this question completely.

\begin{thm}
The following problem is NP-complete, even when restricted to
$4$-connected plane triangulations:
given a plane triangulation $G$ and an integer $k$, decide whether $\nu_3(G) \ge k$, where $\nu_3(G)$ denotes the maximum number of pairwise vertex-disjoint triangles in $G$.
\end{thm}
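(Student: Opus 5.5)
Membership in NP is immediate: a certificate is a list of $k$ pairwise vertex-disjoint triangles, and it can be checked in polynomial time. The work is in NP-hardness. In a $4$-connected plane triangulation every triangle is facial, so $\nu_3(G)\ge k$ holds if and only if the auxiliary face graph $H$ has an independent set of size $k$. I would reduce from a problem that is already NP-complete on planar graphs and has a local, gadget-friendly structure. The most natural candidate is \textsc{Partition into Triangles} (triangle factor) on planar graphs of maximum degree $4$, whose hardness is cited above via \citet{CapraraRizzi2002}. An alternative is \textsc{Planar Exact Cover by 3-Sets} / \textsc{Planar 3-Dimensional Matching}. I would first aim for the decision question ``does $G$ have a triangle factor?''; equivalently, is $\nu_3(G)\ge |V(G)|/3$?

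The construction goes in order as follows. Take a planar instance, say a planar bipartite incidence graph of an X3C instance with elements $U$ and triples $\mathcal S$, each element lying in at most $3$ triples. Replace each element $u$ by a vertex gadget and each triple by a facial-triangle gadget. Each triple gadget is a small triangulated disk with exactly two triangle-factors of its interior. One factor covers the three ``port'' vertices shared with the element gadgets, and the other leaves them uncovered. This mimics the classical chain-of-triangles gadget used for \textsc{Partition into Triangles}. Element gadgets force each element's port vertex to be covered exactly once. Then fill the remaining regions of the planar drawing with triangulated patches that themselves admit triangle factors, such as octahedral or $3$-coloured fillers whose vertex count is divisible by $3$. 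The filled regions must contribute no alternative triangles that connect different gadgets. The target is $k=|V(G)|/3$.

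The main obstacle is \emph{$4$-connectivity}: the final graph may have no separating triangles, and no vertex may get degree $3$. Filling faces naively, as in the $P_3$ construction of \Cref{lem:p3-tight-family}, creates separating triangles. My first attempt would be to put every gadget inside a ``thick'' annulus triangulated so that every face boundary has length at least $4$ before filling. Each resulting face would then be filled by a wheel-like patch with a central path instead of a single apex, which avoids separating triangles. With this, a triangle factor of the filler regions is essentially forced. The second difficulty is soundness. Since every triangle is facial, the new faces introduced by fillers create new candidate triangles, and I must show these cannot be combined to cover gadget vertices in unintended ways. For this I would do a local counting argument: in each filler patch, any triangle factor must cover the patch interior internally, using a divisibility or parity argument on the number of interior vertices. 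After that, only the gadget-level choices remain, and they correspond exactly to exact covers. Completeness is then direct. Given an exact cover, pick the ``covering'' factor in the selected triple gadgets, the ``non-covering'' factor elsewhere, and the fixed factors in the fillers.
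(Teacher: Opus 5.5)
Your proposal has a genuine gap: it is a plan rather than a proof, and its decisive step fails. You target the factor version, so every filler vertex must be covered. In a triangulation, every face that straddles a filler and its surroundings is a usable triangle. A divisibility condition on the number of interior filler vertices does not force the filler to be covered internally. For example, $3m$ interior vertices can be covered by $m-1$ internal triangles, one triangle with two interior vertices and one boundary vertex, and one triangle with one interior vertex and two boundary vertices. The filler then absorbs three gadget vertices while every count stays divisible by $3$.

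For the specific filler you propose to secure $4$-connectivity (a face filled around a central path), things are worse. The interior vertices induce a path, which contains no triangle. So in every factor each interior vertex lies in a triangle that uses boundary vertices, and the filler can never be covered internally. Your claim that ``only the gadget-level choices remain'' therefore has no support. Moreover, the triple gadgets, element gadgets and annuli are never specified, so neither soundness nor the absence of separating triangles can be checked. Also note that NP-hardness of the factor version on $4$-connected triangulations would be strictly stronger than the theorem.

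The idea that makes the paper's proof work is to reduce to the packing number with a threshold. Then fillers need not be covered, and they can be decoupled by an exchange argument. The paper starts from \textsc{Independent Set} on $3$-connected cubic planar graphs $H$ of girth at least $4$ and takes $G_0=L(H)$. The only triangles of $G_0$ are the claws $\Delta_a$, and $\Delta_a\cap\Delta_b\neq\emptyset$ iff $ab\in E(H)$, so $\nu_3(G_0)=\alpha(H)$. Every other face of $G_0$ is a chordless cycle of length at least $4$. Each such face is filled by a path $u_1v_Fw_Fx_Fy_Fu_2$ between non-consecutive boundary vertices, together with two apices $z_1^F,z_2^F$. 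Every triangle meeting the new vertices contains an apex, so at most two packing triangles meet a gadget. Meanwhile $z_1^Fv_Fw_F$ and $z_2^Fx_Fy_F$ use only new vertices, which gives $\nu_3(G)=\alpha(H)+2q$ exactly. Finally, the chordlessness of the face boundaries makes every triangle of $G$ facial, which yields $4$-connectivity.
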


\begin{proof}
The problem is in NP, since a family of at most $|V(G)|/3$ triangles can be checked in polynomial time.

We prove NP-hardness by reducing from \textsc{Independent Set} on $3$-connected cubic planar graphs of girth at least $4$.  
This restricted problem is NP-complete.  Indeed, it is equivalent to the corresponding vertex-cover problem under the transformation $k \mapsto |V(H)|-k$, since $S \subseteq V(H)$ is independent if and only if $V(H)\sm  S$ is a vertex cover.  
The required vertex-cover hardness follows from Uehara's construction \cite{Uehara1996}\footnote{See also \cite[Lemma~3.1 and Remark~3.8]{seo2026complexity} for a detailed account of the $3$-connected cubic planar case and the girth-at-least-$4$ variant.}.

Let $(H,k)$ be such an instance.  
Fix a plane embedding of $H$, and let $G_0 \coloneqq L(H)$ be the line graph of $H$, embedded in the natural way.

For each vertex $a \in V(H)$, the three edges of $H$ incident on $a$ form a triangle in $G_0$; denote this triangle by $\Delta_a$.
We note that every triangle in a line graph either comes from a triangle in the original graph or from three edges incident to a common vertex.
Since $H$ is triangle-free, every triangle of $G_0$ is of the latter form.

In addition, we have $\Delta_a \cap \Delta_b \neq \emptyset$ if and only if $ab \in E(H)$, because then the edge $ab$ of $H$ is a vertex of both corresponding triangles in $L(H)$.  
Therefore vertex-disjoint triangle packings in $G_0$ are in one-to-one correspondence with independent sets in $H$.
Hence we have $\nu_3(G_0) = \alpha(H)$.
Now consider the faces of the natural plane embedding of $G_0$.  They
are of two types.  
First, for every $a \in V(H)$, the triangle $\Delta_a$ is a facial triangle.  
Second, every face $f$ of $H$ gives a face $C_f$ of $G_0$, whose boundary length is the length of $f$.
Since $H$ has girth at least $4$, all these faces $C_f$ have length at least $4$.

We now triangulate every non-triangular face $F$ of $G_0$ by the following gadget.  
Choose two non-consecutive vertices $u_1,u_2$ on the boundary of $F$.  
Add inside $F$ a path
\[
        u_1 v_F w_F x_F y_F u_2
\]
of length $5$, see \Cref{fig:Gadget}.
This splits $F$ into two faces, say $F^1$ and $F^2$.
For each $i \in \{1,2\}$, add a new vertex $z_i^F$ inside $F^i$, and join $z_i^F$ to every vertex on the boundary of $F^i$.  
Let the resulting plane graph be $G$.
All faces of $G$ are triangles, and the construction creates no loops or parallel edges.  
Thus $G$ is a simple plane triangulation.  
Let
\[
        q \coloneqq |\{F : F \text{ is a non-triangular face of } G_0\}|.
\]
\begin{clm}
We claim that $\nu_3(G) = \nu_3(G_0) + 2q$.
\end{clm}
\begin{clmproof}
We fix one gadget inserted into a face $F$, and put
\[
        I_F \coloneqq \{v_F,w_F,x_F,y_F,z_1^F,z_2^F\}.
\]
Every triangle of $G$ that intersects $I_F$ contains at least one of
$z_1^F,z_2^F$.  Suppose, to the contrary, that a triangle $T$ meets
$I_F$ but avoids both $z_1^F$ and $z_2^F$.  Then $T$ contains one of
$v_F,w_F,x_F,y_F$.  In the graph $G-\{z_1^F,z_2^F\}$, the neighbours
of these four vertices are respectively
\[
\begin{aligned}
N_{G-\{z_1^F,z_2^F\}}(v_F) &= \{u_1,w_F\},&
N_{G-\{z_1^F,z_2^F\}}(w_F) &= \{v_F,x_F\},\\
N_{G-\{z_1^F,z_2^F\}}(x_F) &= \{w_F,y_F\},&
N_{G-\{z_1^F,z_2^F\}}(y_F) &= \{x_F,u_2\}.
\end{aligned}
\]
Each pair on the right-hand side is non-adjacent.  Hence no triangle
can meet $\{v_F,w_F,x_F,y_F\}$ while avoiding both $z_1^F$ and $z_2^F$, a contradiction.  Thus every triangle meeting $I_F$ contains $z_1^F$ or $z_2^F$.
It follows that a vertex-disjoint triangle packing can contain at most
two triangles meeting $I_F$, one using $z_1^F$ and one using $z_2^F$.
On the other hand, the two triangles
\[
        z_1^F v_F w_F
        \quad\text{and}\quad
        z_2^F x_F y_F
\]
are vertex-disjoint and use only the vertices of $I_F$.  Therefore every
gadget contributes exactly two triangles independently of the rest of
the graph.

Formally, let $\mathcal{P}$ be any triangle packing in $G$.  For each
non-triangular face $F$ of $G_0$, at most two triangles of
$\mathcal{P}$ meet $I_F$.  Removing all triangles that meet some
$I_F$, the remaining triangles use only the vertices of $G_0$.  Since no
edges between old vertices were added, those remaining triangles are
triangles of $G_0$.  
Hence we obtain  $|\mathcal{P}| \le \nu_3(G_0) + 2q$.
Conversely, take a maximum triangle packing of $G_0$, and add, for each gadget, the two internal triangles $z_1^F v_F w_F$ and $z_2^F x_F y_F$.
These added triangles are pairwise vertex-disjoint and avoid all old vertices.  Hence we have $\nu_3(G) \ge \nu_3(G_0) + 2q$.
Thus we obtain $\nu_3(G) = \nu_3(G_0) + 2q = \alpha(H) + 2q$.
Consequently, we have $\alpha(H) \ge k$ if and only if $\nu_3(G) \ge k + 2q$. 
\end{clmproof}

It remains to verify that $G$ is $4$-connected.

\begin{clm}
We claim that $G$ is $4$-connected.
\end{clm}
\begin{clmproof}
We show that every
triangle of $G$ is facial.  Since $G$ is a simple plane triangulation,
this is equivalent to saying that $G$ has no separating triangle, and
hence $G$ is $4$-connected.

First consider a triangle $T$ of $G$ using no new gadget vertex.  Then
$T$ is a triangle of $G_0=L(H)$, and, as observed above, every
triangle of $G_0$ is one of the facial triangles $\Delta_a$.

Now suppose $T$ uses a vertex from some gadget inserted in a face
$F$.  It cannot use vertices from two different gadgets, since no edge
joins new vertices belonging to different gadgets.  If $T$ uses one of
$v_F,w_F,x_F,y_F$, then, by the argument above, $T$ also uses
$z_1^F$ or $z_2^F$.  Hence $T$ has the form $z_i^F ab$,
where $i \in \{1,2\}$ and $a,b$ are adjacent vertices on the boundary
of $F^i$.

The boundary of $F^i$ has no chords.  Indeed, the old part of this
boundary is an arc of the non-triangular face $F=C_f$ of $G_0=L(H)$,
where $f$ is a face of $H$.  Two vertices on $C_f$ are adjacent in $L(H)$ only when the corresponding two edges of $H$ are consecutive on the facial cycle bounding $f$.  
Thus $C_f$ has no chords in $G_0$.  
The newly added path $u_1 v_F w_F x_F y_F u_2$
also has no chords, no internal vertex of this path is adjacent to an old boundary vertex except through the prescribed path edge, and $u_1u_2 \notin E(G_0)$ because $u_1,u_2$ were chosen non-consecutive on $C_f$.  
Therefore $a$ and $b$ are consecutive on the boundary of
$F^i$, so $z_i^F ab$ is one of the facial triangles created by the fan from $z_i^F$.

Therefore every triangle of $G$ is facial.  Thus $G$ is a
$4$-connected plane triangulation. 
\end{clmproof}
The reduction maps $(H,k)$ to
$(G,k+2q)$ in polynomial time, so the problem is NP-hard even on $4$-connected plane triangulations.  Since the problem is in NP, it is NP-complete.
\end{proof}

\begin{figure}
    \centering
    \includegraphics[scale=0.7]{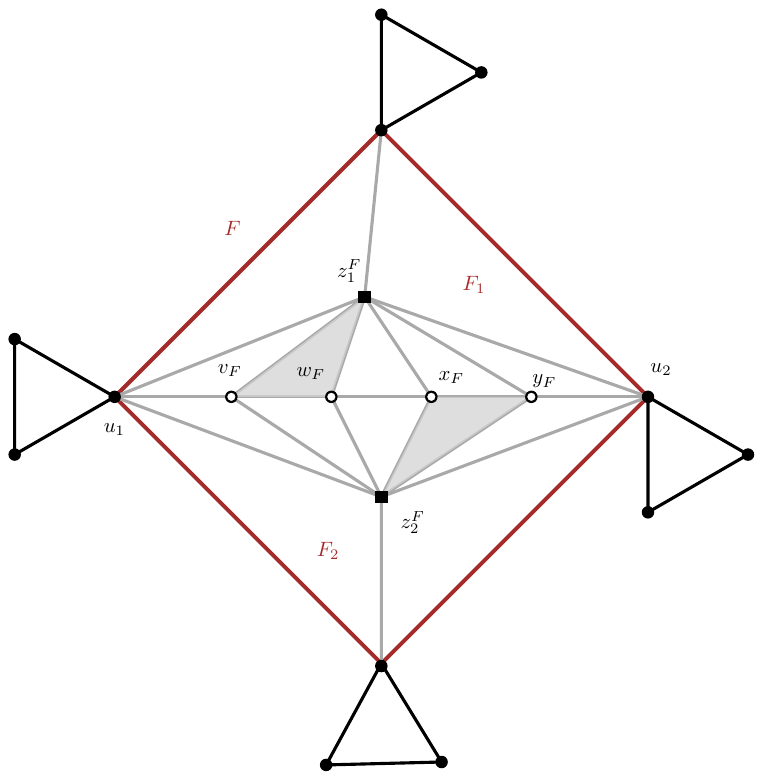}
\caption{The face gadget used to triangulate a non-triangular face $F$ of $G_0$. 
Two non-consecutive boundary vertices $u_1$ and $u_2$ are connected by the path
$u_1v_Fw_Fx_Fy_Fu_2$, splitting $F$ into two regions $F_1$ and $F_2$. 
For each $i\in\{1,2\}$, a vertex $z_i^F$ is added inside $F_i$ and joined to every
vertex on the boundary of $F_i$. The gadget triangulates $F$ and contributes the
two internal vertex-disjoint triangles $z_1^Fv_Fw_F$ and $z_2^Fx_Fy_F$.}
\label{fig:face-gadget}    \label{fig:Gadget}
\end{figure}
\begin{rmk}
If one starts instead from the planar vertex-disjoint-triangles
hardness theorem of Guruswami et al.~
\cite{GuruswamiRanganChangChangWong1998}, the same face gadget already
proves NP-hardness for plane triangulations.  The extra line-graph
starting point above is what guarantees that the final triangulation
has no separating triangles, and therefore is $4$-connected.
\end{rmk}

\section{\texorpdfstring{$K_3$-factors}{k3-factors}}

\begin{defn}
A \defin{triangle factor} or \defin{$K_3$-factor} is a packing that covers all vertices.
In other words, a triangle factor in a graph 
$G$ is a spanning subgraph whose connected components are all triangles.
\end{defn}

\begin{lem}[\citet{CorradiHajnal1963}] 
Any $n$-vertex graph $G$ with $n \in 3\mathbb{N}$ and $\delta(G) \ge \frac{2n}{3}$ contains a triangle factor.   
\end{lem}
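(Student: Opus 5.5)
The plan is to use the classical Hajnal--Szemerédi-style route of Corrádi and Hajnal. Argue by contradiction and take a maximum triangle packing $\mathcal T$ in $G$ with $|\mathcal T|=t<n/3$. Among all maximum packings, choose one and then analyse the set $R=V(G)\sm V(\mathcal T)$ of uncovered vertices; $|R|=n-3t\ge 3$ and $|R|\equiv 0 \pmod 3$. Since $\mathcal T$ is maximum, $G[R]$ is triangle-free. The goal is to exhibit a local exchange that increases $|\mathcal T|$, which is the contradiction.

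The key step is degree counting between $R$ and the triangles of $\mathcal T$. Pick three vertices $x,y,z\in R$, chosen to contain an edge of $G[R]$ if possible. Since $G[R]$ is triangle-free on $r=|R|$ vertices, each vertex of $R$ has at most $r/2$ neighbours in $R$ by Mantel-type reasoning applied locally. More simply, $d_R(x)+d_R(y)+d_R(z)\le$ the number of edges among them plus their outside $R$-neighbours, which is controlled. Then
\[
\sum_{T\in\mathcal T} e(\{x,y,z\},T)\ \ge\ 3\cdot\tfrac{2n}{3}-\bigl(d_R(x)+d_R(y)+d_R(z)\bigr),
\]
and averaging over the $t$ triangles shows that some $T=abc\in\mathcal T$ receives at least $7$ edges from $\{x,y,z\}$ (in the case $r=3$), or at least a comparable number in general. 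A triangle receiving $7$ edges from $\{x,y,z\}$ allows a swap. Some vertex of $T$, say $a$, is adjacent to two of $x,y,z$ that are themselves adjacent, which yields a new triangle. Alternatively, two vertices of $T$ have two common neighbours among $x,y,z$, which yields two disjoint triangles replacing $T$. The case analysis on the $3\times 3$ bipartite adjacency pattern between $\{x,y,z\}$ and $\{a,b,c\}$ is routine.

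The main obstacle is the case where $G[R]$ has no edges, or few, so that a single triangle with many edges to $R$ only gives a swap that keeps $|\mathcal T|$ fixed. There I would first try a potential-function refinement. Among maximum packings, choose one maximizing $e(G[R])$ and perform the swap that replaces $T$ by a triangle through an $R$-vertex while releasing a vertex of $T$ with more neighbours in $R$. This increases $e(G[R])$ and contradicts the choice, or else it produces an edge in $R$, reducing to the previous case.

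If this bookkeeping becomes unwieldy, I would instead cite the original Corrádi--Hajnal argument, which handles exactly this two-level extremal choice; the statement is a classical theorem quoted in the paper. Note that in a planar triangulation $\delta\le 5$, so the hypothesis forces $n\le 7$. For the paper's purposes, a direct check of small cases, $n\in\{3,6\}$, where $K_3$ and $K_6$ minus at most a small matching trivially have a factor, also suffices.
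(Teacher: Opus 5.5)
\textbf{There is a genuine gap.} The paper gives no proof of this lemma: it is quoted from Corr\'adi and Hajnal, and the paper never invokes it. So your fallback of citing the original is exactly what the paper does. As a self-contained argument, however, your sketch fails at its central step.

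You claim that a triangle $T=abc\in\mathcal T$ receiving $7$ edges from $\{x,y,z\}$ admits an exchange that increases $|\mathcal T|$. This is false.
\begin{itemize}
\item Suppose $xy$ is the only edge inside $\{x,y,z\}$. Let $x,y$ be adjacent to all of $a,b,c$, and let $z$ be adjacent only to $a$. That is $7$ edges, but $z$ lies in no triangle of $G[\{x,y,z,a,b,c\}]$, so these six vertices do not contain two disjoint triangles. The triangle $axy$ you mention merely replaces $T$ and leaves $|\mathcal T|$ unchanged.
\item Even when $xyz$ is a path, the pattern $y\sim a,b,c$, $x,z\sim a,b$, $x,z\not\sim c$ has $7$ edges and no two disjoint triangles. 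Every triangle through $c$ lies inside $\{a,b,c,y\}$, and the three remaining vertices always include the non-adjacent pair $x,z$.
\end{itemize}
What is true is that $8$ edges, together with an edge inside $\{x,y,z\}$, suffice. But your averaging only guarantees an average exceeding $6$ by $O(1/t)$ edges per triangle, so for large $t$ it cannot force $8$ edges into a single triangle.

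A smaller error: the per-vertex bound $d_R(v)\le r/2$ does not follow from Mantel's theorem, since a star is triangle-free. The correct tool is $d_R(x)+d_R(y)\le r$ for adjacent $x,y\in R$.

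Your potential-function fallback does not close the gap. You assert, but do not show, that some exchange strictly increases $e(G[R])$. In the path configuration above, replacing $T$ by $xab$ turns the leftover path $xyz$ into the path $cyz$, with the same number of edges in $R$.

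Handling exactly these rigid configurations is the real content of the theorem. For a leftover path $xyz$ one gets $\sum_{T\in\mathcal T} e(\{x,y,z\},T)\ge 6t+2$. In a maximum packing each term is at most $7$, with equality only in the pattern above. Hence at least two such triangles exist, and they must be exploited jointly by exchanges involving several triangles. That is what the original Corr\'adi--Hajnal argument does.

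A clean complete route is via the complement. Since $\Delta(\overline G)\le n/3-1$, the Hajnal--Szemer\'edi theorem (which has a short proof by Kierstead and Kostochka) gives an equitable $(n/3)$-colouring of $\overline G$. Its classes are $n/3$ triples, each independent in $\overline G$ and hence a triangle in $G$.

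Finally, your remark that $\delta\le 5$ in planar graphs forces $n\le 7$ is correct. But the lemma is stated for arbitrary graphs, so checking $n\in\{3,6\}$ does not prove it.
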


Let $G$ be a $4$-connected plane triangulation, and let $C$ be a hamiltonian cycle of $G$.
Since $C$ passes through every vertex of $G$, the embedding of $C$ separates the graph into exactly two regions: the interior of $C$ and the exterior of $C$. Every edge of $G$ is therefore either an edge of $C$, an edge drawn inside $C$, or an edge drawn outside $C$.
Let $E^+$ be the set of edges drawn on one side of $C$, and let $E^-$ be the set of edges drawn on the other side of $C$. Define
\[
G^+ = (V(G),\, E(C) \cup E^+),
\qquad
G^- = (V(G),\, E(C) \cup E^-).
\]
These are exactly the two outerplanar graphs determined by the hamiltonian cycle $C$. In fact, because $G$ is a triangulation, both $G^+$ and $G^-$ are maximal outerplanar graphs: in the embeddings with $C$ as the outer boundary, all vertices lie on the outer face, and every bounded face is a triangle.

\begin{obs}
If $G$ is a maximal outerplanar graph, then its weak dual is a tree.
\end{obs}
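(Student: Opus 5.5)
The plan is to show that the weak dual $T^*$ of a maximal outerplanar graph $G$ is connected and acyclic. We fix an embedding in which all vertices lie on the outer face, so the outer boundary is a hamiltonian cycle $C$ (for $|V(G)|\ge 3$) and every bounded face is a triangle. Two bounded faces are adjacent in $T^*$ exactly when they share an edge. Such a shared edge is necessarily a chord of $C$, because an edge of $C$ borders the outer face on one side and so lies on only one bounded face.

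\emph{Acyclicity.} We first show that every chord $e=uv$ of $C$ separates the set of bounded faces. The chord $e$ splits the closed disk bounded by $C$ into two subdisks, bounded by $e$ together with the two arcs of $C$ between $u$ and $v$. Each bounded face lies in exactly one subdisk, and the only dual edge joining a face in one subdisk to a face in the other is the dual of $e$ itself. Hence every edge of $T^*$ is a bridge, and a graph in which every edge is a bridge contains no cycle.

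\emph{Connectivity.} We argue by induction on $n=|V(G)|$. For $n=3$ there is a single bounded face, so the claim is trivial. For $n\ge 4$, there are at least two bounded faces, so at least one chord exists. Pick any chord and split $G$ along it into two smaller maximal outerplanar graphs, each bounded by the chord together with one arc of $C$. By induction both have connected weak duals. The two faces incident with the chord, one on each side, are adjacent in $T^*$, so the two dual components are joined and $T^*$ is connected.

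As an alternative for connectivity, one can use counting. By Euler's formula, a maximal outerplanar graph on $n$ vertices has $2n-3$ edges and $n-2$ bounded faces. It therefore has $n-3$ chords, and these correspond bijectively to the edges of $T^*$. A forest with $n-2$ vertices and $n-3$ edges is a tree, so acyclicity together with this count already gives connectivity.

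The only step requiring care is the separation claim, namely that a chord really splits the disk into two subdisks with no other dual adjacency across it. This follows from the Jordan curve theorem applied to the closed curve formed by the chord and either arc of $C$.
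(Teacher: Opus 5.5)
The paper states this as an observation with no proof, treating it as a standard fact, so there is no argument of the paper's to compare against. Your proof is correct and supplies the missing justification.

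Both halves of your argument hold up. Every weak-dual edge is dual to a chord of $C$. A chord separates the disk bounded by $C$, so its dual edge is a bridge. Splitting along a chord gives two smaller graphs whose outer boundaries are cycles through all their vertices and whose bounded faces are triangles, which is all the induction uses.

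The argument is longer than necessary, though. You already have $n-2$ bounded faces and $n-3$ chords. Each chord gives exactly one edge of the weak dual, and distinct chords give distinct dual edges, since two distinct triangular faces of a simple graph share at most one edge. So the weak dual has $v$ vertices and $v-1$ edges for $v=n-2$. Such a graph is a tree as soon as it is acyclic or connected, so either your bridge argument or your inductive connectivity argument alone finishes the proof.

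Your setup excludes only $|V(G)|\le 2$, where the weak dual is empty. These cases do not arise in the paper: $G^+$ and $G^-$ span a $4$-connected triangulation, so they have at least $6$ vertices.
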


Now we need a couple of notations.
Let $G$ be a $4$-connected triangulation on $n$ vertices. Let $C=v_1v_2\cdots v_nv_1$ be a hamiltonian cycle of $G$, and let $G^+$ and $G^-$
be the two maximal outerplanar graphs determined by $C$, as described above.

For a plane outerplanar graph $O$, let $F_b(O)$ denote the set of bounded
faces of $O$. Thus the faces in $F_b(G^+)$ and $F_b(G^-)$ are precisely the
triangular faces of $G$ lying on the two sides of $C$.

For $\sigma\in\{+,-\}$, let $T^\sigma$ be the weak dual of $G^\sigma$,
that is, the graph whose vertices are the bounded faces of $G^\sigma$, with
two vertices adjacent if the corresponding bounded faces share an edge, see \Cref{fig:4conn-three-figures}.

\begin{defn}
For each $j\in\{1,\dots,n\}$ and $\sigma\in\{+,-\}$, let $P_j^\sigma$
denote the subgraph of $T^\sigma$ induced by the bounded faces of
$G^\sigma$ incident with $v_j$. Since $G^\sigma$ is maximal outerplanar, $P_j^\sigma$ is a path; call it the \defin{face-path} of $v_j$ in $T^\sigma$, see \Cref{fig:4conn-three-figures}.
\end{defn}

\begin{thm}\label{thm:4-conn}
Let $G$ be a $4$-connected plane triangulation.
Then the following are equivalent.

\begin{enumerate}
\item $G$ has a triangle factor.

\item $G^*$ has an independent set
$X\subseteq V(G^*)$ such that every face of $G^*$ contains exactly one vertex
of $X$.

\item There exist sets of bounded faces
$\mathcal T^+\subseteq F_b(G^+)$ and $\mathcal T^-\subseteq F_b(G^-)$ such that
every vertex $v_j$ of $C$ belongs to exactly one face of
$\mathcal T^+\cup\mathcal T^-$.

\item There exist sets $X^+\subseteq V(T^+)$ and $X^-\subseteq V(T^-)$ such
that, for every $j\in\{1,\dots,n\}$,
\[
|X^+\cap V(P_j^+)|+|X^-\cap V(P_j^-)|=1.
\]
\end{enumerate}
\end{thm}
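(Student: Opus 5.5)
I will prove the cycle of equivalences (1)$\Leftrightarrow$(3)$\Leftrightarrow$(4) and, separately, (1)$\Leftrightarrow$(2), using as the common bridge the fact recorded in the preliminaries: in a $4$-connected plane triangulation every triangle is facial.

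First, (1)$\Leftrightarrow$(2). Since all triangles of $G$ are facial, a triangle factor of $G$ is exactly a set $X$ of faces of $G$ such that every vertex of $G$ lies on exactly one face of $X$. Under duality, faces of $G$ are vertices of $G^*$ and vertices of $G$ are faces of $G^*$, and incidence is preserved. So "every vertex of $G$ lies on exactly one chosen face" becomes "every face of $G^*$ contains exactly one vertex of $X$". Independence of $X$ in $G^*$ is then automatic. Two adjacent dual vertices correspond to faces sharing an edge, hence sharing two vertices, which would put some vertex in two chosen faces. I will state this explicitly, so that the independence clause in (2) is seen to be implied. The converse direction reads the same dictionary backwards.

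Next, (1)$\Leftrightarrow$(3). Every face of $G$ lies on exactly one side of $C$, so $F(G)=F_b(G^+)\,\dot\cup\,F_b(G^-)$. This uses that, with $C$ as outer boundary, the bounded faces of $G^\sigma$ are precisely the triangular faces of $G$ on that side; I will justify this by noting that each face of $G$ has no edge crossing $C$. Given a triangle factor, all of its triangles are facial, so splitting it by side gives $\mathcal T^+$ and $\mathcal T^-$. Every vertex lies on $C$ (Hamiltonicity) and is covered exactly once. Conversely, the faces in $\mathcal T^+\cup\mathcal T^-$ are triangles of $G$ covering each vertex exactly once, which is a triangle factor.

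Finally, (3)$\Leftrightarrow$(4). Take $X^\sigma=\mathcal T^\sigma$, viewed as vertices of $T^\sigma$. By definition $V(P_j^\sigma)$ is the set of bounded faces of $G^\sigma$ incident with $v_j$. Hence $|X^\sigma\cap V(P_j^\sigma)|$ counts the chosen faces on side $\sigma$ containing $v_j$, and the displayed equation says that $v_j$ lies in exactly one chosen face. The path structure of $P_j^\sigma$ is not even needed for the equivalence; it only provides the interpretation.

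The only step with real content is the claim that a triangle factor uses only facial triangles and that faces split cleanly between the two sides of $C$. This is where $4$-connectivity is used. Everything else is bookkeeping, so I expect no serious obstacle beyond stating these identifications carefully.
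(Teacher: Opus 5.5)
Your proposal is correct and follows essentially the same route as the paper. Both rest on the fact that all triangles are facial. Both prove (1)$\Leftrightarrow$(2) by the face/vertex duality dictionary, with independence coming from two faces sharing an edge, and both treat (3) and (4) as bookkeeping via $F(G)=F_b(G^+)\cup F_b(G^-)$ and the definition of $P_j^\sigma$. The only cosmetic difference is that the paper links (3) to (2) rather than directly to (1), which is immaterial given the duality dictionary.
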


\begin{proof}
Since $G$ is $4$-connected, every triangle of $G$ is facial. Hence a
triangle factor of $G$ is exactly a set of pairwise vertex-disjoint faces of
$G$ covering $V(G)$.

$(1)\Rightarrow(2)$.
Let $\mathcal T$ be a triangle factor of $G$, and let
$X=\{f^*:f\in\mathcal T\}\subseteq V(G^*)$.
If two vertices of $X$ were adjacent in $G^*$, then the corresponding faces
of $G$ would share an edge, hence two vertices, contradicting the
vertex-disjointness of $\mathcal T$. So $X$ is independent.
For each $v\in V(G)$, the faces of $G$ incident with $v$ correspond to
the vertices on the dual face of $G^*$ associated with $v$. Since
$\mathcal T$ covers $v$ exactly once, so the dual face contains exactly one
vertex of $X$. Thus $X$ is an independent set.

$(2)\Rightarrow(1)$.
Let $X\subseteq V(G^*)$ be independent and let $\mathcal T=\{f:f^*\in X\}$.
If $f,g\in\mathcal T$ shared a vertex $v$, then both $f^*$ and $g^*$
would lie on the dual face corresponding to $v$, contradicting interdependence. 
Hence the faces in $\mathcal T$ are pairwise vertex-disjoint.
Again by face-exactness, each $v\in V(G)$ is incident with exactly one face
of $\mathcal T$. Therefore $\mathcal T$ is a triangle factor.

$(2)\Leftrightarrow(3)$.
The bounded faces of $G^+$ together with the bounded faces of $G^-$ are
precisely the faces of the triangulation $G$; the outer faces of $G^+$ and
$G^-$ are not faces of $G$ and are not included. Therefore a set
$X\subseteq V(G^*)$ is the same thing as a pair of sets
$(\mathcal T^+,\mathcal T^-)$ with
\[
\mathcal T^\sigma\subseteq F_b(G^\sigma).
\]
For a fixed $j$, the dual face of $G^*$ corresponding to $v_j$ consists
exactly of the faces of $G$ incident with $v_j$, equivalently the bounded
faces of $G^+$ and $G^-$ incident with $v_j$.
Hence that dual face contains exactly one vertex of $X$ if and only if
$v_j$ belongs to exactly one face of $\mathcal T^+\cup\mathcal T^-$.
So (2) and (3) are equivalent.

$(3)\Leftrightarrow(4)$.
For each $\sigma\in\{+,-\}$, vertices of $T^\sigma$ are bounded faces of
$G^\sigma$. Thus a set of bounded faces
$\mathcal T^\sigma\subseteq F_b(G^\sigma)$ corresponds uniquely to a vertex
set $X^\sigma\subseteq V(T^\sigma)$.
By definition of $P_j^\sigma$, a bounded face of $G^\sigma$ contains $v_j$
if and only if the corresponding vertex of $T^\sigma$ lies on $P_j^\sigma$.
Therefore $v_j$ belongs to exactly one face of
$\mathcal T^+\cup\mathcal T^-$ if and only if
\[
|X^+\cap V(P_j^+)|+|X^-\cap V(P_j^-)|=1.
\]
So (3) and (4) are equivalent.
\end{proof}

\begin{figure}[H]
\centering
\tikzset{
  every node/.style={font=\small},
  vtx/.style={circle,fill=black,inner sep=1.45pt},
  markedv/.style={circle,draw=green!50!black,line width=.8pt,inner sep=3.8pt},
  ham/.style={black,line width=1.05pt},
  plusedge/.style={blue!65!black,line width=.85pt},
  minusedge/.style={orange!85!black,line width=.85pt},
  selectedface/.style={fill=red!25,draw=red!75!black,line width=.55pt,fill opacity=.55},
  flabel/.style={font=\scriptsize,fill=white,inner sep=1pt},
  dualedge/.style={gray!65,line width=.7pt},
  pathdual/.style={green!55!black,line width=1.8pt,opacity=.65},
  dualnode/.style={circle,draw=gray!70,fill=white,inner sep=1.7pt},
  selnode/.style={circle,draw=red!75!black,fill=red!25,inner sep=1.9pt},
  paneltitle/.style={font=\small\bfseries,fill=white,inner sep=1pt}
}

\newcommand{\SamePanelBox}{%
  \path[use as bounding box] (-3.75,-3.05) rectangle (3.75,1.90);
}

\begin{subfigure}[t]{0.47\textwidth}
\centering
\resizebox{\linewidth}{!}{%
\begin{tikzpicture}[x=1cm,y=1cm]
  \SamePanelBox

  \coordinate (v1) at (-1.35,-0.1);
  \coordinate (v2) at (-.85,1.25);
  \coordinate (v3) at (.85,1.25);
  \coordinate (v4) at (1.75,0);
  \coordinate (v5) at (.85,-1.25);
  \coordinate (v6) at (-.85,-1.05);

  \coordinate (a1) at (-2.55,.75);
  \coordinate (a2) at (-2.55,-.95);
  \coordinate (b1) at (-3.75,.55);
  \coordinate (b2) at (-2.52,-2.75);
  \coordinate (c1) at (2.35,.75);
  \coordinate (c2) at (2.35,-.95);

  \filldraw[selectedface] (v1)--(v3)--(v4)--cycle;
  \filldraw[selectedface]
    (v2) .. controls (b1) and (b2) .. (v5)
         -- (v6)
         .. controls (a2) and (a1) .. (v2)
         -- cycle;

  \draw[ham] (v1)--(v2)--(v3)--(v4)--(v5)--(v6)--cycle;

  \draw[plusedge] (v1)--(v3) (v1)--(v4) (v4)--(v6);

  \draw[minusedge]
    (v2) .. controls (a1) and (a2) .. (v6)
    (v2) .. controls (b1) and (b2) .. (v5)
    (v3) .. controls (c1) and (c2) .. (v5);

  \node[vtx,label={[font=\scriptsize]left:$v_1$}] at (v1) {};
  \node[vtx,label={[font=\scriptsize]above left:$v_2$}] at (v2) {};
  \node[vtx,label={[font=\scriptsize]above right:$v_3$}] at (v3) {};
  \node[vtx,label={[font=\scriptsize]right:$v_4$}] at (v4) {};
  \node[vtx,label={[font=\scriptsize]below right:$v_5$}] at (v5) {};
  \node[vtx,label={[font=\scriptsize]below left:$v_6$}] at (v6) {};
  \node[markedv] at (v6) {};

  \node[flabel] at (.35,.35) {$G^+$};
  \node[flabel] at (-1.90,-1.95) {$G^-$};
  \node[flabel] at (.72,.72) {$t_2^+$};
  \node[flabel] at (-1.9,-1) {$t_2^-$};
\end{tikzpicture}%
}
\caption{The triangulation $G$ with hamiltonian cycle $C$}
\end{subfigure}
\hfill
\begin{subfigure}[t]{0.47\textwidth}
\centering
\resizebox{\linewidth}{!}{%
\begin{tikzpicture}[x=1cm,y=1cm]
  \SamePanelBox

  \coordinate (v1) at (-1.75,0);
  \coordinate (v2) at (-.85,1.25);
  \coordinate (v3) at (.85,1.25);
  \coordinate (v4) at (1.75,0);
  \coordinate (v5) at (.85,-1.25);
  \coordinate (v6) at (-.85,-1.25);

  \filldraw[selectedface] (v1)--(v3)--(v4)--cycle;

  \draw[ham] (v1)--(v2)--(v3)--(v4)--(v5)--(v6)--cycle;
  \draw[plusedge] (v1)--(v3) (v1)--(v4) (v4)--(v6);

  \node[flabel] at (-.60,.83) {$t^+_1$};
  \node[flabel] at (.20,.43) {$t^+_2$};
  \node[flabel] at (-.20,-.43) {$t^+_3$};
  \node[flabel] at (.60,-.83) {$t^+_4$};

  \node[vtx,label={[font=\scriptsize]left:$v_1$}] at (v1) {};
  \node[vtx,label={[font=\scriptsize]above:$v_2$}] at (v2) {};
  \node[vtx,label={[font=\scriptsize]above:$v_3$}] at (v3) {};
  \node[vtx,label={[font=\scriptsize]right:$v_4$}] at (v4) {};
  \node[vtx,label={[font=\scriptsize]below right:$v_5$}] at (v5) {};
  \node[vtx,label={[font=\scriptsize]below left:$v_6$}] at (v6) {};
  \node[markedv] at (v6) {};

  \node[flabel] at (-1.95,-2.25) {$T^+$};

  \coordinate (p1) at (-1.20,-2.25);
  \coordinate (p2) at (-.40,-2.25);
  \coordinate (p3) at (.40,-2.25);
  \coordinate (p4) at (1.20,-2.25);

  \draw[dualedge] (p1)--(p2)--(p3)--(p4);
  \draw[pathdual] (p3)--(p4);

  \node[dualnode,label={[font=\scriptsize]below:$t^+_1$}] at (p1) {};
  \node[selnode,label={[font=\scriptsize]below:$t^+_2$}] at (p2) {};
  \node[dualnode,label={[font=\scriptsize]below:$t^+_3$}] at (p3) {};
  \node[dualnode,label={[font=\scriptsize]below:$t^+_4$}] at (p4) {};

  \node[flabel] at (.83,-2.62) {$P^+_6$};
\end{tikzpicture}%
}
\caption{The inside graph $G^+$ and its weak dual $T^+$}
\end{subfigure}

\vspace{1.2ex}

\begin{subfigure}[t]{0.47\textwidth}
\centering
\resizebox{\linewidth}{!}{%
\begin{tikzpicture}[x=1cm,y=1cm]
  \SamePanelBox

  \coordinate (v1) at (-1.75,0);
  \coordinate (v2) at (-.85,1.25);
  \coordinate (v3) at (.85,1.25);
  \coordinate (v4) at (1.75,0);
  \coordinate (v5) at (.85,-1.25);
  \coordinate (v6) at (-.85,-1.25);

  \filldraw[selectedface] (v2)--(v5)--(v6)--cycle;

  \draw[ham] (v1)--(v2)--(v3)--(v4)--(v5)--(v6)--cycle;
  \draw[minusedge] (v2)--(v6) (v2)--(v5) (v3)--(v5);

  \node[flabel] at (-1.18,0) {$t^-_1$};
  \node[flabel] at (-.30,-.50) {$t^-_2$};
  \node[flabel] at (.30,.50) {$t^-_3$};
  \node[flabel] at (1.18,0) {$t^-_4$};

  \node[vtx,label={[font=\scriptsize]left:$v_1$}] at (v1) {};
  \node[vtx,label={[font=\scriptsize]above:$v_2$}] at (v2) {};
  \node[vtx,label={[font=\scriptsize]above:$v_3$}] at (v3) {};
  \node[vtx,label={[font=\scriptsize]right:$v_4$}] at (v4) {};
  \node[vtx,label={[font=\scriptsize]below right:$v_5$}] at (v5) {};
  \node[vtx,label={[font=\scriptsize]below left:$v_6$}] at (v6) {};
  \node[markedv] at (v6) {};
References
  \node[flabel] at (-1.95,-2.25) {$T^-$};

  \coordinate (m1) at (-1.20,-2.25);
  \coordinate (m2) at (-.40,-2.25);
  \coordinate (m3) at (.40,-2.25);
  \coordinate (m4) at (1.20,-2.25);

  \draw[dualedge] (m1)--(m2)--(m3)--(m4);
  \draw[pathdual] (m1)--(m2);

  \node[dualnode,label={[font=\scriptsize]below:$t^-_1$}] at (m1) {};
  \node[selnode,label={[font=\scriptsize]below:$t^-_2$}] at (m2) {};
  \node[dualnode,label={[font=\scriptsize]below:$t^-_3$}] at (m3) {};
  \node[dualnode,label={[font=\scriptsize]below:$t^-_4$}] at (m4) {};

  \node[flabel] at (-.83,-2.62) {$P^-_6$};
\end{tikzpicture}%
}
\caption{The outside graph $G^-$ and its weak dual $T^-$}
\end{subfigure}

\caption{Figure (a) shows the full triangulation $G$ with hamiltonian cycle
$C=v_1v_2\cdots v_6v_1$. Figures (b) and (c) show the two maximal
outerplanar graphs $G^+$ and $G^-$ determined by $C$, together with their
weak duals $T^+$ and $T^-$. The red triangles/nodes indicate the selected
faces $t_2^+$ and $t_2^-$, while the green paths indicate $P_6^+$ and
$P_6^-$.}
\label{fig:4conn-three-figures}
\end{figure}

\begin{cor}\label{lem:demand-4conn}
Let $G$ be a $4$-connected plane triangulation. 
Let $I\colon V(G)\to\{0,1\}$.
Then the following are equivalent.

\begin{enumerate}
\item There exists a family $\mathcal T$ of pairwise vertex-disjoint triangles
of $G$ such that every vertex $v$ of $G$ belongs to exactly $I(v)$ triangles of
$\mathcal T$.

\item There exist sets of bounded faces
$\mathcal T^+\subseteq F_b(G^+)$ and
$\mathcal T^-\subseteq F_b(G^-)$ such that every vertex $v_j$ of $C$ belongs
to exactly $I(v_j)$ faces of $\mathcal T^+\cup\mathcal T^-$.

\item There exist sets $X^+\subseteq V(T^+)$ and $X^-\subseteq V(T^-)$
such that, for every $j\in\{1,\dots,n\}$,
\[
|X^+\cap V(P_j^+)|+|X^-\cap V(P_j^-)|=I(v_j).
\]
\end{enumerate}
\end{cor}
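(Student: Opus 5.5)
The plan is to rerun the proof of \Cref{thm:4-conn}, replacing ``exactly one'' by ``exactly $I(v)$'' at every step. The key structural input is again 4-connectivity. Every triangle of $G$ is facial, so a family $\mathcal T$ of pairwise vertex-disjoint triangles is the same thing as a set of pairwise vertex-disjoint faces of $G$. Moreover, the faces of $G$ are partitioned into $F_b(G^+)\cup F_b(G^-)$.

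\textbf{(1)$\Rightarrow$(2).} Given $\mathcal T$, set $\mathcal T^\sigma=\mathcal T\cap F_b(G^\sigma)$ for $\sigma\in\{+,-\}$. The faces of $G$ incident with $v_j$ are exactly the bounded faces of $G^+$ and of $G^-$ incident with $v_j$. Hence the number of faces of $\mathcal T^+\cup\mathcal T^-$ containing $v_j$ equals the number of members of $\mathcal T$ containing $v_j$, which is $I(v_j)$.

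\textbf{(2)$\Rightarrow$(1).} Let $\mathcal T=\mathcal T^+\cup\mathcal T^-$, viewed as a set of facial triangles of $G$ (the union is disjoint, since the two face sets are disjoint). Each vertex lies in exactly $I(v)\le 1$ members of $\mathcal T$, so no vertex lies in two members. The members are therefore pairwise vertex-disjoint, and each $v$ lies in exactly $I(v)$ of them.

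This step is the only real point of care, and I expect it to be the main obstacle. The hypothesis $I\colon V(G)\to\{0,1\}$ is exactly what turns the counting condition into vertex-disjointness. With larger values we would only obtain a multicover, not a packing. Vertices with $I(v)=0$ are simply left uncovered, so the conclusion is a partial packing rather than a factor.

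\textbf{(2)$\Leftrightarrow$(3).} Identify $\mathcal T^\sigma$ with $X^\sigma\subseteq V(T^\sigma)$; this works because the vertices of $T^\sigma$ are the bounded faces of $G^\sigma$. By the definition of the face-path, a face of $G^\sigma$ contains $v_j$ if and only if its dual vertex lies on $P_j^\sigma$. So the number of faces of $\mathcal T^+\cup\mathcal T^-$ containing $v_j$ is
\[
|X^+\cap V(P_j^+)|+|X^-\cap V(P_j^-)|.
\]
Setting this equal to $I(v_j)$ gives the equivalence.

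As a consistency check, the case $I\equiv 1$ recovers \Cref{thm:4-conn}.
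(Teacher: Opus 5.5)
Your proposal is correct and follows the paper's route. The paper's proof notes that every triangle is facial and that $F_b(G^+)\cup F_b(G^-)$ is exactly the face set of $G$, and then says the proof of \Cref{thm:4-conn} goes through verbatim with the right-hand side $1$ replaced by $I(v_j)$. You carry this out explicitly, and you correctly point out that the restriction $I(v)\le 1$ is what turns the counting condition in (2) into vertex-disjointness, which the paper leaves implicit.
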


\begin{proof}
Since $G$ is $4$-connected, every triangle of $G$ is facial. Hence a family of pairwise vertex-disjoint triangles of $G$ is the same thing as a set of pairwise vertex-disjoint faces of $G$. The bounded faces of $G^+$ together with the bounded faces of $G^-$ are precisely the faces of $G$.
The proof of \Cref{thm:4-conn} applies verbatim after replacing the constant right-hand side $1$ by the prescribed demand $I(v_j)$ at each vertex $v_j$.
Thus the three conditions are equivalent.
\end{proof}

The approximation is obtained by a standard splitting argument: any optimal packing splits into triangles lying inside $C$ and triangles lying outside $C$, so one of the two sides contains at least half of the optimum.

\begin{thm}\label{cor:4conn-two-approx}
Let $G$ be a $4$-connected plane triangulation. Then the maximum triangle packing problem in $G$ has a $2$-approximation.
\end{thm}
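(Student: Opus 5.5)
The plan is to prove the claim via a hamiltonian cycle, two one-sided problems solved exactly, and the better of the two answers.

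First we find a hamiltonian cycle $C$ of $G$ in polynomial time. Since $G$ is $4$-connected, Tutte's theorem guarantees that such a cycle exists. For $4$-connected planar graphs it can be computed in linear time; we will cite the known algorithms (Chiba--Nishizeki, or the linear-time algorithms following Sanders' proof). From $C$ we form the two maximal outerplanar graphs $G^+$ and $G^-$ and their weak duals $T^+$ and $T^-$. By the Observation, both weak duals are trees.

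Since $G$ is $4$-connected, every triangle is facial. Every face of $G$ is a bounded face of exactly one of $G^+$ and $G^-$. Hence an optimal packing $\mathcal T^*$ splits as $\mathcal T^*=(\mathcal T^*\cap F_b(G^+))\cup(\mathcal T^*\cap F_b(G^-))$, and one part has at least $|\mathcal T^*|/2$ triangles. Each part is a packing of pairwise vertex-disjoint bounded faces of a single maximal outerplanar graph.

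It therefore suffices to compute, for each $\sigma\in\{+,-\}$, a maximum family of pairwise vertex-disjoint bounded faces of $G^\sigma$ exactly in polynomial time. We then output the larger of the two families. It is a packing in $G$ whose size is at least $\max_\sigma \nu^\sigma\ge |\mathcal T^*|/2$, where $\nu^\sigma$ denotes the one-sided optimum.

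The main step is the exact one-sided computation. The cleanest route is to note that $G^\sigma$ is outerplanar and hence has treewidth at most $2$. The auxiliary problem of choosing vertex-disjoint bounded faces is expressible in MSO, or directly by dynamic programming over the tree $T^\sigma$. We will give the direct DP.

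Root $T^\sigma$ at some face. Each non-root face $t$ is attached to its parent through a shared chord $xy$, and the subtree at $t$ corresponds to the sub-polygon cut off by that chord. For each $t$ we store four values, one for each subset $S\subseteq\{x,y\}$. The value is the maximum number of pairwise disjoint faces chosen in the subtree of $t$ such that the vertices of $\{x,y\}$ covered by chosen faces are exactly those in $S$.

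A face $t=xyz$ has at most two children, across $xz$ and $zy$. We combine the children's tables, either selecting $t$ or not. Selecting $t$ requires that $x,y,z$ are all uncovered by the children, and then covers all three. Otherwise we require that $z$ is not covered by both children simultaneously, and record which of $x,y$ are covered. This takes constant time per face, so it runs in linear time overall.

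The delicate point, and the one we expect to be the main obstacle, is the correctness of the DP. We must check that disjointness conflicts between faces in different subtrees can only occur at the separating chord vertices. This holds because the chord $xy$ separates the polygon, so two faces on opposite sides of the chord can share only $x$ or $y$. With this established, the DP is exact. Combined with the splitting argument, the output is a polynomial-time $2$-approximation.
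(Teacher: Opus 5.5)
Your proposal is correct and follows the paper's proof. Both split an optimal packing along the hamiltonian cycle $C$ and return the better of the two exact one-sided optima; the paper phrases these as maximum stable sets in the auxiliary graphs $H^\sigma$. Your additions go beyond the paper: the citation for computing $C$ in linear time, and the dynamic program over the weak dual $T^\sigma$, which supply the polynomial-time justification that the paper's proof leaves implicit. In that DP, either root $T^\sigma$ at an ear or allow the root face up to three children.
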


\begin{proof}
Let $G^+$ and $G^-$ be the two maximal outerplanar graphs determined by
$C$. For $\sigma\in\{+,-\}$, define an auxiliary graph $H^\sigma$ whose
vertices are the bounded faces of $G^\sigma$, with two vertices adjacent
whenever the corresponding faces share a vertex of $G$.
An independent set  in $H^\sigma$ is exactly a family of pairwise vertex-disjoint
triangular faces of $G^\sigma$. Thus it gives a triangle packing in $G$ using
only triangles on the $\sigma$-side of $C$. Let $S^\sigma$ be a maximum stable
set in $H^\sigma$, and return the larger of $S^+$ and $S^-$.
Let $\mathcal O$ be an optimal triangle packing in $G$. Since $G$ is
$4$-connected, every triangle of $G$ is facial. Hence every triangle in
$\mathcal O$ lies on exactly one side of $C$. 
Write $\mathcal O=\mathcal O^+\cup \mathcal O^-$,
where $\mathcal O^\sigma$ is the set of triangles of $\mathcal O$ lying in
$G^\sigma$. Since $\mathcal O^\sigma$ is a feasible independent set  in
$H^\sigma$, we have $|S^\sigma|\ge |\mathcal O^\sigma|$ for each $\sigma\in\{+,-\}$.
Therefore
\[
\max\{|S^+|,|S^-|\}
\ge
\frac{|S^+|+|S^-|}{2}
\ge
\frac{|\mathcal O^+|+|\mathcal O^-|}{2}
=
\frac{|\mathcal O|}{2}.
\]
Thus the larger of the two one-sided solutions has size at least half of the
optimum, giving a $2$-approximation.
\end{proof}

\begin{rmk}
The above $2$-approximation is not the best possible from the general planar
algorithmic point of view. Since triangle packing is the maximum
$H$-matching problem for the fixed graph $H=K_3$, Baker's layering technique
for planar graphs gives a PTAS~\cite{Baker1994}. In particular, for every
fixed $\varepsilon>0$, there is a polynomial-time algorithm that, given a
plane triangulation $G$, finds a triangle packing of size at least $(1-\varepsilon)\nu_3(G)$.
This applies in particular to $4$-connected plane triangulations.
\end{rmk}
\section{\texorpdfstring{$P_2\cup P_1$-packings}{p1p2-packing}}

In this section, we study packings by induced copies of $P_2\cup P_1$.
A $3$-vertex set induces $P_2\cup P_1$ in a triangulation $T$ precisely when it
induces a $P_3$ in $\overline T$. We exploit this complementary viewpoint
indirectly, using a large matching in $T$ and a planar bipartite obstruction.

\begin{defn}
Let $G$ be a graph. A \defin{$(P_2\cup P_1)$-packing} in $G$ is a family of pairwise vertex-disjoint induced subgraphs of $G$, each isomorphic to $P_2\cup P_1$. We write $\lambda_{P_2\cup P_1}(G)$
for the maximum size of such a packing.
\end{defn}

\begin{thm}\label{thm:p2p1-packing}
Let $T$ be a plane triangulation on $n$ vertices. Then
$\lambda_{P_2\cup P_1}(T)\ge \left\lfloor \frac n3\right\rfloor-2$.
\end{thm}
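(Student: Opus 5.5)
Following the hint in the section introduction, the plan is to build the packing from a large matching in $T$ and then attach to each chosen edge a third vertex adjacent to neither endpoint. A triple $\{a,b,c\}$ induces $P_2\cup P_1$ exactly when $ab\in E(T)$ and $ac,bc\notin E(T)$. So it suffices to find $k=\lfloor n/3\rfloor-2$ disjoint edges $a_ib_i$ together with $k$ further distinct vertices $c_i$ such that each $c_i$ is nonadjacent to both $a_i$ and $b_i$.

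\emph{Step 1: a large matching.} Every triangulation on $n\ge 4$ vertices is $3$-connected, so by the Nishizeki--Baybars matching theorem for planar graphs it has a matching of size at least $\min\{\lfloor n/2\rfloor,\lfloor (n+4)/3\rfloor\}\ge \lfloor n/3\rfloor$. (Alternatively, a triangulation on $n\ge 4$ vertices with an even number of vertices has a near-perfect matching by Tutte's theorem for $4$-connected graphs, but I would cite the general planar bound.) From this matching, select $k$ edges $M=\{a_1b_1,\dots,a_kb_k\}$, and let $R=V(T)\setminus V(M)$. Then $|R|=n-2k\ge k+4$.

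\emph{Step 2: assignment via Hall.} Form the bipartite graph $B$ between $M$ and $R$, joining $e_i=a_ib_i$ to $c\in R$ whenever $c$ is nonadjacent to both $a_i$ and $b_i$. A matching of $B$ saturating $M$ gives the packing. Equivalently, consider the ``bad'' bipartite graph $B'$ in which $e_i$ is joined to $c\in R$ if $c$ is adjacent to $a_i$ or $b_i$. Contracting each edge $a_ib_i$ of $T$ shows that $B'$ is a simple planar bipartite graph, since it is a subgraph of a minor of $T$. Suppose Hall fails, so that some $S\subseteq M$ has fewer than $|S|$ good partners in $R$. Then more than $|R|-|S|$ vertices of $R$ are adjacent in $B'$ to every edge of $S$. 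This produces a complete bipartite subgraph $K_{|S|,\,|R|-|S|+1}$ in the planar graph $B'$. Planarity forbids $K_{3,3}$, so either $|S|\le 2$ or $|R|-|S|+1\le 2$. The second option is impossible because $|R|-|S|\ge 4$. Hence the only obstructions are sets $S$ with $|S|\le 2$ for which all but at most one (respectively, zero) vertex of $R$ touches $S$.

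\emph{Step 3: repairing the small obstructions.} This is the main obstacle. A single edge $a_ib_i$ whose endpoints dominate all of $R$, or two edges jointly dominating $R$, are the only failure modes. I would handle them by discarding such edges from $M$ and relying on the slack in $|R|$. Replacing $k$ by a slightly smaller value, or dropping offending edges, enlarges $R$; the ``$-2$'' in the bound is exactly the budget for this. Concretely, I would first take a matching of size $\lfloor n/3\rfloor$, then delete from $M$ any edge in a violating set $S$ with $|S|\le 2$. Such a deletion moves two vertices into $R$, which can only help the remaining edges. I then need to argue that at most two edges are ever removed. The argument is that two disjoint violating configurations would give a $K_{3,3}$-type structure in the contracted planar graph, since each would be adjacent to nearly all of a large $R$. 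Finally, the construction is polynomial: compute the matching, build $B$, and run bipartite matching, iterating the deletion step at most a constant number of times.
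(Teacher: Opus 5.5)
\textbf{Verdict: there is a genuine gap in Step 3.} Your Steps 1 and 2 match the paper's setup. You take a matching of size $q=\lfloor n/3\rfloor$ from Nishizeki--Baybars and form the ``bad'' bipartite graph $B'$ between matching edges and unmatched vertices. $B'$ is planar as a subgraph of $T/M$, hence $K_{3,3}$-free. You identify Step 3 as the crux, but you do not prove it.

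The key claim is that deleting every edge lying in a Hall-violating set of size at most $2$ removes at most two edges. Your justification, ``two disjoint violating configurations would give a $K_{3,3}$'', is false: two matching edges each dominating all of $R$ give only $K_{2,|R|}$ in $B'$, which is planar. What is true is that each edge in such a violator has at most one good partner, so three of them have at least $|R|-3$ common bad partners. That guarantees $K_{3,3}$ only when $|R|\ge 6$.

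For smaller $R$, planarity of $B'$ alone does not stop the deletion rule from overshooting. Take $q=|R|=4$ and $R=\{r,r',s,t\}$. Let $B'$ be $K_{4,2}$ on $M\times\{s,t\}$, together with $r'$ joined to $e_1,e_2$ and $r$ joined to $e_3,e_4$; this graph is planar. Then $\{e_1,e_2\}$ and $\{e_3,e_4\}$ are both violators, so your rule deletes all four edges. Yet $e_1r$, $e_3r'$ is a valid assignment of size $q-2$.

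There is a second problem. Once you restart from $|M|=q$ with only $|R|\ge q$, the dichotomy of Step 2 no longer holds, because violators with $|S|\ge |R|-1$ are possible. You would also have to truncate afterwards and check that no new violators appear.

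\textbf{The missing idea} is to apply the $K_{3,3}$ obstruction to the \emph{deficiency}, not to Hall's condition for a pre-truncated $M$. Keep all $q$ matching edges, so $|R|\ge q$. Suppose $B$ had no matching of size $q-2$. The defect form of Hall's theorem then gives $S\subseteq M$ with $|S|-|N_B(S)|\ge 3$. So $|S|\ge 3$, and $R\setminus N_B(S)$ has at least $|R|-|S|+3\ge 3$ vertices, each joined in $B'$ to every edge of $S$. This produces $K_{3,3}$, a contradiction, and no repair step is needed.

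The paper does the same thing more simply, by a greedy argument. While at least three unused matching edges and three unused vertices of $U$ remain, the planar bipartite graph between them cannot be complete. So some non-adjacent pair $(xy,u)$ exists, and $\{x,y,u\}$ induces $P_2\cup P_1$. This runs for $q-2$ steps.

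Separately, your parenthetical alternative for Step 1 via Tutte's theorem is incorrect, since triangulations need not be $4$-connected. You do not rely on it.
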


\begin{proof}
We put $q=\left\lfloor n/3\right\rfloor$.
If $q\le 2$, the assertion is trivial, so assume $q\ge 3$.
By the theorem of Nishizeki and Baybars~\cite{NishizekiBaybars1979}, every
connected planar graph with minimum degree at least $3$ has a matching of size
at least $\lfloor n/3\rfloor$. Since $T$ is a plane triangulation and
$q\ge 3$, the graph $T$ has a matching $M=\{e_1,\dots,e_q\}$
of size $q$.
Let $U\coloneqq V(T)\sm V(M)$. Since $n=3q+r$, where $r\in\{0,1,2\}$, we have $|U|=n-2q=q+r\ge q$.
We define a bipartite graph $C$ with parts $M$ and $U$. For an edge
$e=xy\in M$ and a vertex $u\in U$, join $e$ to $u$ in $C$ if $u$ is adjacent
in $T$ to $x$ or to $y$. The graph $C$ is planar: contract every edge of $M$ in the planar embedding of $T$, and then keep only the edges between the contracted vertices corresponding to $M$ and the vertices of $U$.
We now construct the packing one copy at a time. Suppose that after
$i<q-2$ steps, the unused sets are $M_i\subseteq M$ and $U_i\subseteq U$.
Then $|M_i|=q-i\ge 3$ and $|U_i|\ge q-i\ge 3$.
Since $C[M_i\cup U_i]$ is planar, it is not complete bipartite; otherwise it
would contain a copy of $K_{3,3}$. Hence there are $e=xy\in M_i$ and
$u\in U_i$ that are non-adjacent in $C$.
By the definition of $C$, the vertex $u$ is adjacent to neither $x$ nor $y$ in $T$. 
Since $xy\in E(T)$, the set $\{x,y,u\}$ induces a copy of
$P_2\cup P_1$ in $T$. 
We add this copy to the packing, delete $e$ from the
unused matching edges, delete $u$ from the unused vertices, and continue.
After $q-2$ steps, this produces $q-2$ pairwise vertex-disjoint induced copies of $P_2\cup P_1$. Therefore
\[
\lambda_{P_2\cup P_1}(T)\ge q-2
=
\left\lfloor \frac n3\right\rfloor-2.\qedhere
\]
\end{proof}
\begin{figure}[H]
\centering
\tikzset{
  vtx/.style={circle,fill=black,inner sep=1.45pt},
  uvtx/.style={circle,draw=red!70!black,fill=red!18,inner sep=2.2pt},
  edgeT/.style={black,line width=.55pt},
  medge/.style={blue!70!black,line width=1.25pt},
  auxedge/.style={gray!75,line width=.8pt},
  chosen/.style={green!50!black,line width=1.3pt},
  boxlab/.style={font=\small\bfseries},
  smlab/.style={font=\scriptsize,fill=white,inner sep=1pt},
  enode/.style={rectangle,draw=blue!70!black,fill=blue!10,
                rounded corners=1pt,minimum width=10pt,
                minimum height=10pt,inner sep=0pt},
  unode/.style={circle,draw=red!70!black,fill=red!18,
                minimum size=10pt,inner sep=0pt}
}

\begin{subfigure}[t]{0.48\textwidth}
\centering
\resizebox{\linewidth}{!}{%
\begin{tikzpicture}[x=1cm,y=1cm]
  \coordinate (A) at (0,3.1);
  \coordinate (B) at (-2.8,-1.8);
  \coordinate (C) at (2.8,-1.8);

  \coordinate (D) at (0,0);
  \coordinate (P) at (-1.2,0.2);
  \coordinate (Q) at (0.9,-1.0);
  \coordinate (R) at (1.2,0.2);
  \coordinate (Uone) at (-1.5,1);
  \coordinate (S) at (1.4,-1.1);

  \draw[edgeT] (B)--(C)--(A);
  \draw[edgeT] (A) .. controls (-2.7,1.1) .. (B);

  \draw[edgeT] (A)--(D) (B)--(D) (C)--(D);

  \draw[edgeT] (P)--(A) (P)--(B) (P)--(D);

  \draw[edgeT] (Q)--(B) (Q)--(C) (Q)--(D);

  \draw[edgeT] (R)--(C) (R)--(A) (R)--(D);

  \draw[edgeT] (Uone)--(A) (Uone)--(B) (Uone)--(P);

  \draw[edgeT] (S)--(C) (S)--(D) (S)--(Q);

  \draw[medge] (A)--(R);   
  \draw[medge] (Q)--(S);   
  \draw[medge] (B)--(P);   

  \draw[chosen,dashed,rounded corners] (0.68,-0.78) rectangle (1.62,-1.35);
  \draw[chosen,dashed] (Uone) circle (.26);

  \foreach \p in {A,B,P,Q,R,S}
    \node[vtx] at (\p) {};

  \foreach \p in {Uone,C,D}
    \node[uvtx] at (\p) {};

  \node[smlab] at (.58,1.65) {$e_1$};
  \node[smlab] at (1.34,-.72) {$e_2$};
  \node[smlab] at (-2.08,-.68) {$e_3$};

  \node[smlab] at (-2,0.7) {$u_1$};
  \node[smlab] at (3.10,-1.58) {$u_2$};
  \node[smlab] at (-.32,-.28) {$u_3$};
\end{tikzpicture}%
}
\caption{$M=\{e_1,e_2,e_3\}$ is shown in blue, and
$U=\{u_1,u_2,u_3\}$ is shown in red. The edge $e_2$ and the vertex $u_1$
are highlighted.}
\end{subfigure}
\hfill
\begin{subfigure}[t]{0.48\textwidth}
\centering
\resizebox{\linewidth}{!}{%
\begin{tikzpicture}[x=1cm,y=1cm]
  \coordinate (E1) at (-2.6,-1.0);
  \coordinate (E2) at (-0.45,0.85);
  \coordinate (E3) at (1.45,-0.4);

  \coordinate (U1) at (2.1,-1.0);
  \coordinate (U2) at (0.2,0.2);
  \coordinate (U3) at (-0.45,1.55);

  \draw[auxedge] (E1)--(U1);
  \draw[auxedge] (E1)--(U2);
  \draw[auxedge] (E1)--(U3);

  \draw[auxedge] (E2)--(U2);
  \draw[auxedge] (E2)--(U3);

  \draw[auxedge] (E3)--(U1);
  \draw[auxedge] (E3)--(U2);
  \draw[auxedge] (E3)--(U3);

  \draw[chosen,dashed]
    (E2) .. controls (.75,1.75) and (2.65,.65) .. (U1);
  \node[smlab] at (1.35,1.38) {non-edge of $C$};

  \node[enode] at (E1) {};
  \node[enode] at (E2) {};
  \node[enode] at (E3) {};

  \node[unode] at (U1) {};
  \node[unode] at (U2) {};
  \node[unode] at (U3) {};

  \node[smlab] at (-2.95,-1.23) {$e_1$};
  \node[smlab] at (-.88,.85) {$e_2$};
  \node[smlab] at (1.78,-.63) {$e_3$};

  \node[smlab] at (2.48,-1.0) {$u_1$};
  \node[smlab] at (.55,.37) {$u_2$};
  \node[smlab] at (-.45,1.88) {$u_3$};

  \node[smlab] at (-1.75,1.75) {$M$};
  \node[smlab] at (2.15,.45) {$U$};
\end{tikzpicture}%
}
\caption{The auxiliary bipartite graph $C$. The non-edge
$e_2u_1\notin E(C)$ means that $u_1$ is adjacent in $T$ to neither endpoint
of $e_2$. Thus, if $e_2=xy$, then $\{x,y,u_1\}$ induces a copy of
$P_2\cup P_1$.}
\end{subfigure}

\caption{Illustration of the construction in the proof of
\Cref{thm:p2p1-packing}. The left figure is a triangulation. The right figure
shows the bipartite graph $C$ with parts $M$ and $U$. An edge $eu$ is present
in $C$ whenever the unmatched vertex $u$ is adjacent in $T$ to at least one
endpoint of the matching edge $e$. Hence a non-edge $eu$ in $C$ gives an
induced copy of $P_2\cup P_1$ in $T$.}
\label{fig:p2p1-schematic}
\end{figure}

\begin{cor}\label{cor:p2p1-algorithm}
A packing of size at least $\lfloor n/3\rfloor-2$ can be found in
$O(n^2)$ time.
\end{cor}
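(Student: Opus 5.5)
The plan is to make the proof of \Cref{thm:p2p1-packing} algorithmic and bound the cost of each of its three stages: computing the matching $M$, building the bipartite graph $C$, and running the greedy extraction.

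\textbf{Stage 1: the matching.} We need a matching of size $q=\lfloor n/3\rfloor$ in $T$.

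\begin{itemize}
\item The simplest route is to compute a maximum matching of $T$. The Micali--Vazirani algorithm does this in $O(\sqrt{n}\,m)$ time, and $m=3n-6$, so this is $O(n^{3/2})\subseteq O(n^2)$.
\item By the Nishizeki--Baybars theorem, the maximum matching has at least $q$ edges.
\item We keep any $q$ of its edges as $M=\{e_1,\dots,e_q\}$.
\item We set $U=V(T)\sm V(M)$, so $|U|\ge q$.
\end{itemize}

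No constructive version of the Nishizeki--Baybars theorem is needed, because a maximum matching is at least as large.

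\textbf{Stage 2: the bipartite graph $C$.} For each vertex $u\in U$ and each neighbour $w$ of $u$ in $T$, if $w$ is an endpoint of some $e\in M$, we record the pair $eu$ in $C$. To do this we first store, for each vertex, a pointer to its matching edge. The total work is $O(\sum_u d(u))=O(n)$.

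We then store $C$ as a $q\times|U|$ boolean adjacency matrix. This costs $O(n^2)$ time and space.

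\textbf{Stage 3: the greedy extraction.} We repeat the following while more than $2$ matching edges remain unused:

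\begin{itemize}
\item find an unused pair $(e,u)$ with $e\in M_i$, $u\in U_i$ and $eu\notin E(C)$;
\item output the triple $\{x,y,u\}$, where $e=xy$;
\item mark $e$ and $u$ as used.
\end{itemize}

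This is the step that needs care. A naive search over all pairs costs $O(n^2)$ per step, which gives $O(n^3)$ in total.

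\textbf{Making Stage 3 fast.} We should not rescan all pairs; instead we process the matching edges one at a time.

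\begin{itemize}
\item Each unused $u$ is adjacent in $C$ to at most $d_C(u)$ matching edges, and $C$ is planar, so $|E(C)|\le 2(|M|+|U|)=O(n)$.
\item Hence, for a fixed unused $e$, a non-neighbour among the unused $u$ can be found by scanning its row of the matrix. This takes $O(n)$ time per step.
\item There are at most $q$ steps, so the total is $O(n^2)$.
\end{itemize}

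The one subtlety is that a particular $e$ might be adjacent in $C$ to every unused $u$. The proof of \Cref{thm:p2p1-packing} only guarantees that some pair $(e,u)$ is non-adjacent, not that every $e$ has one. So when scanning a row finds no non-neighbour, we move on to the next unused $e$.

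We must bound the total wasted scanning. We use the fact that a row fails only if $e$ is complete to $U_i$.

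\begin{itemize}
\item If $|U_i|\ge 3$, at most two edges of $M_i$ can be complete to $U_i$, since otherwise $C$ would contain $K_{3,3}$, contradicting planarity.
\item So each step wastes at most two row scans before succeeding, and each step still costs $O(n)$.
\item Equivalently, we can charge every scanned matrix entry that reveals an edge of $C$ to that edge. This bounds all wasted work by $|E(C)|=O(n)$ in total.
\end{itemize}

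Either way, the extraction runs in $O(n^2)$ time.

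Correctness of the output, namely pairwise disjointness and each triple being induced, is exactly as in the proof of \Cref{thm:p2p1-packing}. Adding the three stage costs gives the claimed $O(n^2)$ bound.
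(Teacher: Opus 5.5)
Your proof is correct and is essentially the paper's argument: a maximum matching via Micali--Vazirani together with Nishizeki--Baybars, an $O(n^2)$ table recording which pairs of $M\times U$ are adjacent, and a greedy extraction certified by the absence of $K_{3,3}$ in the planar graph $C$. The only difference is that the paper extracts in a single pass over the precomputed admissible pairs (the non-edges of $C$), which gives a maximal matching of admissible pairs, and applies the $K_{3,3}$ argument once at the end, so it needs no per-step count of failed rows. Your main bound of at most two failing rows per step is sound, but drop the \emph{charging} alternative: failed rows, and entries for already-used $u$, can be rescanned in later steps, so that work is not bounded by $|E(C)|$.
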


\begin{proof}
Let  $q=\left\lfloor n/3 \right\rfloor$.
If $q\le 2$, then the empty packing already has size at least $q-2$, so assume $q\ge 3$.
Since $T$ has $3n-6$ edges, a maximum matching in $T$ can be found in
$O(n^{3/2})$ time by the Micali--Vazirani algorithm~\cite{MicaliVazirani1980}.
By the theorem of Nishizeki and Baybars~\cite{NishizekiBaybars1979}, this
matching has size at least $q$. Choose $q$ edges from it and call the resulting matching $M$. 
Let $U\coloneqq V(T)\sm V(M)$.
Build an adjacency matrix for $T$ in $O(n^2)$ time. Then form all
admissible pairs $(e,u)\in M\times U$ such that, writing $e=xy$, the
vertex $u$ is adjacent to neither $x$ nor $y$. This takes $O(|M||U|)=O(n^2)$ time.

Scan these admissible pairs, selecting a pair whenever neither the
edge of $M$ nor the vertex of $U$ has already been used. This gives a
maximal matching in the bipartite graph of admissible pairs. If fewer than
$q-2$ pairs were selected, then the unused sets $M'\subseteq M$ and
$U'\subseteq U$ would satisfy $|M'|\ge 3$ and  $|U'|\ge 3$, and no admissible pair would remain between them. Equivalently, the graph
$C[M'\cup U']$ from the proof of \Cref{thm:p2p1-packing} would be complete
bipartite, contradicting the planarity of $C$. Thus the greedy scan finds at
least $q-2$ admissible pairs, and these give pairwise vertex-disjoint induced
copies of $P_2\cup P_1$. The total running time is $O(n^{3/2})+O(n^2)=O(n^2)$.
\end{proof}

\begin{op}\label{ques:algorithmic-p2p1-packing}
Determine the best possible function $f(n)$ such that every plane
triangulation $T$ on $n$ vertices satisfies $\lambda_{P_2\cup P_1}(T)\ge f(n)$.
Moreover, can a packing of size $f(n)$ be constructed in
$O(n)$ time?
\end{op}

\begin{op}\label{ques:complexity-induced-p2p1}
Determine the exact complexity of induced $(P_2\cup P_1)$-packing in plane triangulations.
Given a plane triangulation $T$ and an integer $k$, decide whether
$\lambda_{P_2\cup P_1}(T)\ge k$.
By \Cref{thm:p2p1-packing}, every plane triangulation satisfies
\[
\lambda_{P_2\cup P_1}(T)
\ge
\left\lfloor \frac{|V(T)|}{3}\right\rfloor-2.
\]
Since no packing can have more than $\lfloor |V(T)|/3\rfloor$ copies, the
remaining exact problem is concentrated near the top three possible values.
In particular, can one decide in polynomial time whether
\[
\lambda_{P_2\cup P_1}(T)
=
\left\lfloor \frac{|V(T)|}{3}\right\rfloor
\]
or whether
\[
\lambda_{P_2\cup P_1}(T)
\ge
\left\lfloor \frac{|V(T)|}{3}\right\rfloor-1?
\]
Equivalently, when $3\mid |V(T)|$, what is the complexity of deciding whether
$T$ has an induced $(P_2\cup P_1)$-factor?
\end{op}

\section*{Acknowledgements}

The first, second, and third authors acknowledge the support of the Natural Sciences and Engineering Research Council of Canada (NSERC). The fourth author was supported by JSPS KAKENHI Grant Numbers JP22H00513, JP24H00697, JP25K03076, and JP25K03077.

\bibliographystyle{plainurlnat}
\bibliography{ch.bib}

\end{document}